\algrenewcommand\algorithmicindent{0.5em}%
\newtheorem{theorem}{Theorem}[section]
\newtheorem{definition}[theorem]{Definition}
\newtheorem{lemma}[theorem]{Lemma}
\newcommand\heartdisease{HDI}
\newcommand\housing{HOU}
\newcommand\spambase{SPA}
\newcommand\artificial{ART}
\newcommand\ptxt{\mathsf{m}} 
\newcommand\ctxt{\mathsf{c}} 
\newcommand{\ctxtrep}[1]{\lsem #1 \rsem}
\newcommand{\func}[2]{\mathcal{#1}_{\mathsf{#2}}} 
\newcommand\ek{\mathsf{ek}}
\newcommand\secprm{\lambda}
\newcommand\inputlen{\mu}
\newcommand\tfun{\mathsf{thr}} 
\newcommand\afun{\mathsf{att}} 
\newcommand\cfun{\mathsf{lab}} 
\newcommand\path{\mathsf{path}} %
\newcommand\result{\mathsf{result}} %
\newcommand\from{\mathsf{from}} %
\newcommand\tosf{\mathsf{to}} %
\newcommand\leftsf{\mathsf{left}} %
\newcommand\rightsf{\mathsf{right}} %
\newcommand\midsf{\mathsf{mid}} %
\newcommand\up{\mathsf{up}} %
\newcommand\low{\mathsf{low}} %
\newcommand\level{\mathsf{level}} %
\newcommand\currLvl{\mathsf{currLvl}} %
\newcommand\destLvl{\mathsf{destLvl}} %
\newcommand\cost{\mathsf{cost}} %
\newcommand\helibsmall{\text{HElib}_\text{small}} 
\newcommand\helibmed{\text{HElib}_\text{med}} 
\newcommand\helibbig{\text{HElib}_\text{big}} 
\newcommand\helibint{\text{HElib}_\text{int}} 
\newcommand\tfheIZB{\text{TFHE}_\text{128}}
\newcommand{\prot}[1]{\Pi_{\mathsf{#1}}} 
\newcommand{\simul}[2]{\simulator_{\mathsf{#1}}^{\mathsf{\MakeLowercase{#2}}}} 
\newcommand{\view}[2]{\mathsf{View}_{\mathsf{#1}}^{#2}} 
\newcommand{\bitrep}[1]{#1^{\mathsf{b}}}
\newcommand{\bitrepx}[2]{#1^{(#2)^{\mathsf{b}}}}
\newcommand\sheadd{\textsc{SheAdd}} 
\newcommand\shemul{\textsc{SheMult}} 
\newcommand\shecmp{\textsc{SheCmp}} 
\newcommand\shefadder{\textsc{SheFadder}} 
\newcommand\sheequ{\textsc{SheEqual}} 
\newcommand\slotnb{s}
\newcommand\slotct{l}
\newcommand\nblevel{L}
\newcommand\poldegree{N}
\newcommand\node{\mathsf{Node}}
\newcommand\rootnode{\mathsf{root}}
\newcommand\thr{\mathsf{threshold}}
\newcommand\aindex{\mathsf{aIndex}}
\newcommand\pnode{\mathsf{parent}}
\newcommand\lnode{\mathsf{left}}
\newcommand\rnode{\mathsf{right}}
\newcommand\cmp{\mathsf{cmp}}
\newcommand\mdag{\mathsf{dag}}
\newcommand\clabel{\mathsf{cLabel}}
\newcommand\nodeset{\mathcal{D}}
\newcommand\leafset{\mathcal{L}}
\newcommand\classfun{\mathsf{f}_c}
\newcommand\travfun{\mathsf{tr}}
\newcommand\evalnode{\textsc{EvalDnode}}
\newcommand\evalpath{\textsc{EvalPaths}}
\newcommand\evalpathe{\textsc{EvalPathsE}}
\newcommand\evalpathp{\textsc{EvalPathsP}}
\newcommand\evalmul{\textsc{EvalMul}}
\newcommand\addedge{\textsc{AddEdge}}
\newcommand\computedag{\textsc{ComputeDag}}
\newcommand\evalleaves{\textsc{Finalize}}
\newcommand{\enQ}{\mathsf{enqueue}}
\newcommand{\deQ}{\mathsf{dequeue}}
\newcommand{\push}{\mathsf{push}}
\newcommand{\pop}{\mathsf{pop}}
\newcommand{\emptyQ}{\mathsf{empty}}
\algnewcommand\algorithmicforeach{\textbf{for each}}
\newcommand\dagcfun{h}
\newcommand\dagfun{g}
\newcommand\pdtebin{\textsc{Pdt-Bin}}
\newcommand\pdteint{\textsc{Pdt-Int}}
\begin{document}

\title{Non-Interactive Private Decision Tree Evaluation}

\author{\IEEEauthorblockN{Anselme Tueno}
\IEEEauthorblockA{SAP SE\\
anselme.kemgne.tueno@sap.com}
\and
\IEEEauthorblockN{Yordan Boev}
\IEEEauthorblockA{SAP SE\\
iordan.boev@gmail.com}
\and
\IEEEauthorblockN{Florian Kerschbaum}
\IEEEauthorblockA{University of Waterloo\\
florian.kerschbaum@uwaterloo.ca}}

\maketitle

\begin{abstract}
Decision trees are a powerful prediction model with many applications in statistics, data mining, and machine learning. In some settings, the model and the data to be classified may contain sensitive information belonging to different parties. In this paper, we, therefore, address the problem of privately evaluating a decision tree on private data. This scenario consists of a server holding a private decision tree model and a client interested in classifying its private attribute vector using the server’s private model. The goal of the computation is to obtain the classification while preserving the privacy of both – the decision tree and the client input. After the computation, the classification result is revealed only to the client, and nothing else is revealed neither to the client nor to the server. Existing privacy-preserving protocols that address this problem use or combine different generic secure multiparty computation approaches resulting in several interactions between the client and the server.
Our goal is to design and implement a novel client-server protocol that delegates the complete tree evaluation to the server while preserving privacy and reducing the overhead. The idea is to use fully (somewhat) homomorphic encryption and evaluate the tree on ciphertexts encrypted under the client's public key. However, since current somewhat homomorphic encryption schemes have high overhead, we combine efficient data representations with different algorithmic optimizations to keep the computational overhead and the communication cost low. As a result, we are able to provide the first non-interactive protocol, that allows the client to delegate the evaluation to the server by sending an encrypted input and receiving only the encryption of the result. Our scheme has only one round and can evaluate a complete tree of depth 10 within seconds.
\end{abstract}


\maketitle

\section{Introduction}

A machine learning process consists of two phases. In the first phase or \emph{learning phase}, a model or classifier is built on a possibly large set of training data. The model can then be used to classify new data.

\paragraph{Setting} Machine learning (ML) classifiers are valuable tools in many areas such as healthcare, finance, spam filtering, intrusion detection, remote diagnosis, etc \cite{WittenFH.2011}. To perform their task, these classifiers often require access to personal sensitive data such as medical or financial records. Therefore, it is crucial to investigate technologies that preserve the privacy of the data, while benefiting from the advantages of ML. On the one hand, the ML model itself may contain sensitive data. For example, a bank that uses a decision tree for credit assessment of its customers may not want to reveal any information about the model. On the other hand, the model may have been built on sensitive data. It is known that white-box and sometimes even black-box access to a ML model allows so-called \emph{model inversion attacks} \cite{WuFJN.2016, FredriksonJR.2015, TramerZJRR.2016}, which can compromise the privacy of the training data. As a result, making the ML model public could violate the privacy of the training data.

\paragraph{Scenario}  In this paper, we therefore address the problem of private decision tree evaluation (PDTE) on private data. This scenario consists of a server holding a private decision tree model and a client wanting to classify its private attribute vector using the server’s private model. The goal of the computation is to obtain the classification while preserving the privacy of both – the decision tree and the client input. After the computation, the classification result is revealed only to the client, and beyond that, nothing further is revealed to neither party. The problem can be solved using any generic secure multiparty computation. There exist specialized solutions that combine different techniques and use the domain knowledge to develop efficient protocols. 

\paragraph{Generic Secure Two-party Computation Solution} Generic secure two-party computation \cite{Yao.1982, Goldreich.2004, CramerDN2001}, such as garbled circuit and secret sharing, can implement PDTE. The idea is to transform the decision tree program into a secure representation that can be evaluated without revealing private data. There exist frameworks such as ObliVM \cite{LiuWNHS.2015} or CBMC-GC \cite{FranzHKSV.2014} that are able to automate the transformation of the plaintext programs, written in a high level programming language, into oblivious programs suitable for secure computation. Their straightforward application to decision tree programs does certainly improve performance over a hand-crafted construction. However, the size of the resulting oblivious program is still proportional to the size of the tree. 
%
%
As a result generic solution are in general inefficient, in particular when the size of the tree is large.

\paragraph{Specialized Solutions} Specialized protocols \cite{BrickellPSW.2007, BarniFKLSS.2009, BostPTG.2015, WuFNL.2015, TaiMZC.2017, JoyeS18, TuenoKK19, KissNLAS19} exploit the domain knowledge of the problem at hand and make use of generic techniques only where it is necessary, resulting in more efficient solutions. Existing protocols for PDTE have several rounds requiring several interactions between the client and the server. Moreover, the communication cost depends on the size of the decision tree, while only a single classification is required by the client. Finally, they also require computational power from the client that depends on the size of the tree. 

\paragraph{Our Solution Approach} Our goal is to design and implement a novel client-server protocol that delegates the complete tree evaluation to the server while preserving privacy and keeping the performance acceptable. The idea is to use fully or somewhat homomorphic encryption (FHE/SHE) and evaluate the tree on ciphertexts encrypted under the client's public key. As a result, no intermediate or final computational result is revealed to the evaluating server.  However, since current somewhat homomorphic encryption schemes have high overhead, we combine efficient data representations with different algorithmic optimizations to keep the computational overhead and the communication cost low. At the end, the computational overhead might still be higher than in existing protocols, however the computation task can be parallelized resulting in a reduced computation time.
As a result, we are able to provide the first non-interactive protocol, that allows the client to delegate the evaluation to the server by sending an encrypted input and receiving only the encryption of the result. Finally, existing approaches are secure in the semi-honest model and can be made \emph{one-sided simulatable}\footnote{A 2-party protocol between parties $P_1$ and $P_2$ in which only $P_2$ receives an output, is \emph{one-sided simulatable} if it is private (via indistinguishability) against a corrupt $P_1$ and fully simulatable against a corrupt $P_2$ \cite{HazayL10}.} using techniques that may double the computation and communication costs. Our approach is one-sided simulatable by default, as the client does no more than encrypting its input and decrypting the final result of the computation (simulating the client is straightforward), while the server evaluates on ciphertexts encrypted with a semantically secure encryption under the client's public key.

\paragraph{Applications (ML-as-a-service)} Concrete motivation of our approach are machine learning settings (with applications in areas such as healthcare, finance etc.) where the server is computationally powerful, the client is computationally weak and the network connection is not very fast.
Many cloud providers are already proposing platforms that allow users to build machine learning applications \cite{bigML, MicrosoftAzureML, AmazonML, GoogleML, PredictionIOML}. A hospital may want to use such a platform to offer a medical expert system as a ML-as-a-service application to other doctors or even its patients. A software provider may leverage ML-as-a-service to allow its customers to detect the cause of a software error. Software systems use log files to collect information about the system behavior. In case of an error these log files can be used to find the cause of the crash. Both examples (medical data and log files) contain sensitive information which is worth protecting.

\paragraph{Contribution} Our contributions are as follows:
\begin{itemize}
	\item We propose a non-interactive protocol for PDTE. Our scheme allows the client to delegate the evaluation to the server by sending an encrypted input and receiving only the encryption of the result.
	\item We propose $\pdtebin$ which is an instantiation of the main protocol with binary representation of the input. Then we combine efficient data representations with different algorithmic optimizations to keep the computational overhead and the communication cost low.
	\item We propose $\pdteint$ which is an instantiation of the main protocol using arithmetic circuit, where the values are compared using  a modified variant of LinTzeng comparison protocol \cite{LinT05}.
	\item We provide correctness and security proofs of our scheme. Finally, we implement and benchmark both instantiations using HElib \cite{HaleviS14} and TFHE \cite{TFHE}. 
\end{itemize}

\paragraph{Structure} The remainder of the paper is structured as follows. We review related work in Section \ref{Related_work} and preliminaries in Section \ref{Preliminaries} before defining correctness and security of our protocol in Section \ref{Definitions}. The basic construction itself is described in Section \ref{Basic_Protocol}. In Section \ref{Binary_Implementation}, we describe implementation and optimization using a binary representation. In Section \ref{Arithmetic_Implementation}, we describe an implementation using an arithmetic circuit based on LinTzeng comparison protocol \cite{LinT05}. We discuss implementation and evaluation details in Section \ref{Evaluation} before concluding our work in Section \ref{Conclusion}. Due to space constraints, we discuss further details in the appendix.

\section{Related Work}
\label{Related_work}

Our work is related to secure multiparty computation (SMC) \cite{Yao.1982, Goldreich.2004, CramerDN2001, ChaumCD.1988, Ben-OrGW.1988, DamgardPSZ.2012, DamgardKLPSS.2013, KellerOS.2016}, private function evaluation (PFE) \cite{KolesnikovS_FC.2008, MohasselSS.2014} particularly privacy-preserving decision tree evaluation \cite{BrickellPSW.2007, BarniFKLSS.2009, BostPTG.2015, WuFNL.2015, TaiMZC.2017, JoyeS18, TuenoKK19, KissNLAS19} which we briefly review in this section and refer to the literature for more details.

Brikell et al.~\cite{BrickellPSW.2007} propose the first protocol for PDTE by combining homomorphic encryption (HE) and garbled circuits (GC) in a novel way. The server transforms the decision tree into a GC that is executed by the client. To allow the client to learn its garbling key, they combine homomorphic encryption and oblivious transfer (OT).
Although the evaluation time of Brikell et al.'s scheme is sublinear in the tree size, the secure program itself and hence the communication cost is linear and therefore not efficient for large trees. Barni et al.~\cite{BarniFKLSS.2009} improve the previous scheme by not including the leaf node in the transformed secure program, thereby reducing the computation costs by a constant factor.
Bost et al.~\cite{BostPTG.2015} represent the decision tree as a multivariate polynomial. The constants in the polynomial are the classification labels and the variables represent the results of the Boolean conditions at the decision nodes. The parties privately compute the values of the Boolean conditions by comparing each threshold with the corresponding attribute value encrypted under the client's public key. Finally, the server homomorphically evaluates the polynomial and returns the result to the client.  
Wu et al.~\cite{WuFNL.2015} use different techniques that require only additively HE (AHE). They also use the protocol from \cite{DamgardGK.2007} for comparison and reveal to the server comparison bits encrypted under the client's public key. The evaluation of the tree by the sever returns the index of the corresponding classification label to the client. Finally, an OT reveals the final result to the client.  
Tai et al.~\cite{TaiMZC.2017} use the comparison protocol of \cite{DamgardGK.2007} and AHE as well. They mark the left and right edge of each node with the cost $b$ and $1-b$ respectively, where $b$ is the result of the comparison at that node. Finally, they sum for each path of the tree the cost along it. The label of the path whose costs sum to zero, is the classification label.
%
%
Tueno et al.~\cite{TuenoKK19} represent the tree as an array. Then they execute $d$ – depth of the tree – comparisons, each performed using a small garbled circuit, which outputs secret-shares of the index of the next node in tree. They introduce a primitive called oblivious array indexing, that allow the parties to select the next node without learning it.
Kiss et al.~\cite{KissNLAS19} propose a modular design consisting of the sub-functionalities: selection of attributes, integer comparison, and evaluation of paths. They then explore the tradeoffs and performance of possible combinations of state-of-the-art protocols for privately computing the sub-functionalities.
De Cock et al.~\cite{CockDHKNNP.2016} follow the same idea as some previous schemes by first running comparisons. In contrast to all other protocols (ours included), which are secure in the computational setting, they operate in the information theoretic model using secret sharing based SMC and commodity-based cryptography \cite{Beaver.1997} to reduce the number of interactions.
Using a polynomial encoding of the inputs, Lu et al.~\cite{LuZS2018} propose a non-interactive comparison protocol called XCMP using BGV homomorphic scheme \cite{BrakerskiGV11}. They then implement the private decision tree protocol of Tai et al.~\cite{TaiMZC.2017} using XCMP which is \emph{output expressive} (i.e., it preserves additive homomorphism). The resulting decision tree protocol is non-interactive and efficient because of the small \emph{multiplicative depth}. However, it is not \emph{generic}, that is, it primarily works for small inputs and depends explicitly on BGV-type HE scheme. Moreover, it does not support \emph{SIMD} operations and is no longer output expressive as XCMP. Hence, it cannot be extended to a larger protocol (e.g., random forest \cite{Breiman2001}) while preserving the non-interactive property. Finally, its \emph{output length} (i.e., the number of resulted ciphertexts from server computation) is exponential in the depth of the tree, while the output length of our binary instantiation is at most linear in the depth of the tree and the integer instantiation can use SIMD to considerably reduce it. A comparison of decision protocols is summarized in Table \ref{Complexity_of_PDTEP} and \ref{Feature_of_1Round_PDTEP}. A more detailed complexity analysis is described in Appendix \ref{complexity_analysis}.

\begin{table}
	\centering
		\begin{tabular}{|c|l|l|}
		\hline
		\multicolumn{1}{|c|}{\textnormal{Symbol}}  & {\textnormal{Interpretation}}	\\  \hline \hline
		\multicolumn{1}{|c|}{\textnormal{$\inputlen$}}  & {\textnormal{Bit length of attribute values}} \\ \hline
		\multicolumn{1}{|c|}{\textnormal{$n$}}  & {\textnormal{Dimension of the attribute vector}} \\ \hline
		\multicolumn{1}{|c|}{\textnormal{$x = x_0, \ldots, x_{n-1}$}}  & {\textnormal{Attribute vector}} \\ \hline 
		\multicolumn{1}{|c|}{$|\alpha|$}  & {Bitlength of integer $\alpha$, e.g., $|x_i| = \inputlen$} \\ \hline
		\multicolumn{1}{|c|}{$\bitrep{x_i} = x_{i\inputlen} \ldots x_{i1}$}  & {Bit representation of $x_i$ with most significant bit $x_{i\inputlen}$ } \\ \hline
		
		\multicolumn{1}{|c|}{\textnormal{$M$}}  & {\textnormal{Number of nodes}} \\ \hline
		\multicolumn{1}{|c|}{\textnormal{$m$}}  & {\textnormal{Number of decision nodes}} \\ \hline
		\multicolumn{1}{|c|}{\textnormal{$d$}}  & {\textnormal{Depth of the decision tree}} \\ \hline 

		\multicolumn{1}{|c|}{\textnormal{$\ctxtrep{\alpha}$}}  & {\textnormal{Ciphertext HE of a plaintext $\alpha$}} \\ \hline
		\multicolumn{1}{|c|}{$\lsem \bitrep{x_i} \rsem$}  & {Bitwise encryption $(\lsem x_{i\inputlen}\rsem, \ldots, \lsem x_{i1} \rsem)$ of $x_i$} \\ \hline
		\multicolumn{1}{|c|}{$\lsem \alpha_1 | \ldots | \alpha_{\slotnb} \rsem$}  & {Packed ciphertext containing plaintexts $\alpha_1, \ldots, \alpha_{\slotnb}$} \\ \hline
		\multicolumn{1}{|c|}{$\slotnb$}  & {Number of slots in a packed ciphertext} \\ \hline
		\multicolumn{1}{|c|}{$\lsem \vec{x}_i \rsem$}  & {Packed ciphertext $\lsem x_{i\inputlen} | \ldots | x_{i1} | 0 | \ldots | 0 \rsem$ of $\bitrep{x_i}$} \\ \hline
		\end{tabular}
		\captionsetup{justification=centering}
		\caption[Notations]{Notations.}
		\label{Notation_Table}
\end{table}

\begin{table}[t]
\begin{center}
  \begin{tabular}{| c | l | c | c | c | c | c | c |}
    \hline
    \multicolumn{1}{|c|}{\textnormal{Scheme}}                  & {\textnormal{Rounds}}  & {\textnormal{Tools}}      & {\textnormal{Commu-}}     & {\textnormal{Compa-}} & {\textnormal{Leakage}}  \\ 
		\multicolumn{1}{|c|}{\textnormal{ }}                       &            &            & {\textnormal{nication}}     &  {\textnormal{risons}} &  \\ \hline \hline
    \multicolumn{1}{|c|}{\textnormal{\cite{BrickellPSW.2007}}} & {\textnormal{$\approx$5}} & {\textnormal{HE+GC}}      & {\textnormal{$\bigO{2^d}$}}            & {\textnormal{$d$}} & $m, d$       \\ \hline
    \multicolumn{1}{|c|}{\textnormal{\cite{BarniFKLSS.2009}}}  & {\textnormal{$\approx$4}} & {\textnormal{HE+GC}}      & {\textnormal{$\bigO{2^d}$}}            & {\textnormal{$d$}} & $m, d$       \\ \hline
    \multicolumn{1}{|c|}{\textnormal{\cite{BostPTG.2015}}}     & {\textnormal{$\geq$6}}    & {\textnormal{FHE/SHE}}    & {\textnormal{$\bigO{2^d}$}}            & {\textnormal{$m$}} & $m$       \\ \hline
    \multicolumn{1}{|c|}{\textnormal{\cite{WuFNL.2015}}}       & {\textnormal{6}}          & {\textnormal{HE+OT}}      & {\textnormal{$\bigO{2^d}$}}            & {\textnormal{$m$}} & $m$       \\ \hline
		\multicolumn{1}{|c|}{\textnormal{\cite{TaiMZC.2017}}}      & {\textnormal{4}}          & {\textnormal{HE}}         & {\textnormal{$\bigO{2^d}$}}            & {\textnormal{$m$}} & $m$      \\ \hline
    \multicolumn{1}{|c|}{\textnormal{\cite{CockDHKNNP.2016}}}  & {\textnormal{$\approx$9}} & {\textnormal{SS}}         & {\textnormal{$\bigO{2^d}$}}            & {\textnormal{$m$}} & $m, d$       \\ \hline
    \multicolumn{1}{|c|}{\textnormal{\cite{TuenoKK19}}}                    & {\textnormal{$\bigO{d}$}}   & {\textnormal{GC,OT}}      & {\textnormal{$\bigO{2^d}$}}     & {\textnormal{$d$}} & $m, d$       \\ 
		\multicolumn{1}{|c|}{\textnormal{ }}      &  {\textnormal{}} & {\textnormal{ORAM}} & {\textnormal{$\bigO{d^2}$}}     &   &      \\  
	\hline
	\multicolumn{1}{|c|}{\textnormal{\cite{LuZS2018}}}                    & {\textnormal{$1$}}   & {\textnormal{FHE/SHE}}      & {\textnormal{$\bigO{2^d}$}}     & {\textnormal{$m$}}  & $m$      \\
	\hline
	\multicolumn{1}{|c|}{\textnormal{\pdtebin}}                    & {\textnormal{$1$}}   & {\textnormal{FHE/SHE}}      & {\textnormal{$\bigO{1}$ or $\bigO{d}$}}     & {\textnormal{$m$}}  & -      \\ 
		\multicolumn{1}{|c|}{\textnormal{\pdteint}}      &  {\textnormal{$1$}} & {\textnormal{}} & {\textnormal{$\bigO{2^d/s}$}}     &    & $m$     \\ 
		\hline
  \end{tabular}
	\captionsetup{justification=centering}
  \caption[Comparison of Private Decision Tree Protocols.]{Comparison of PDTE protocols.}
  \label{Complexity_of_PDTEP}
\end{center}
\end{table}

\begin{table}[t]
\begin{center}
  \begin{tabular}{|c| l | c | c | c | c | c | c |}
    \hline
    \multicolumn{1}{|c|}{\textnormal{Scheme}}                  & {\textnormal{SIMD}}  & {\textnormal{Generic}}      & {\textnormal{Output-}}     & {\textnormal{Multiplicative}} & {\textnormal{Output}}  \\ 
		\multicolumn{1}{|c|}{\textnormal{ }}                       &            &            & {\textnormal{expressive}}     &  {\textnormal{Depth}} & {\textnormal{Length}}  \\ \hline \hline
	\multicolumn{1}{|c|}{\textnormal{\cite{LuZS2018}}}                    & {\textnormal{no}}   & {\textnormal{no}}      & {\textnormal{no}}     & {\textnormal{$3$}} &  {\textnormal{$2^{d+1}$}}      \\
	\hline
	\multicolumn{1}{|c|}{\textnormal{\pdtebin}}                    & {\textnormal{yes}}   & {\textnormal{yes}}      & {\textnormal{yes}}     & {\textnormal{$|\inputlen| + |d| + 2$}}  &  {\textnormal{$1$ or $d$}}      \\ 
		\multicolumn{1}{|c|}{\textnormal{\pdteint}}      &  {\textnormal{yes}} & {\textnormal{yes}} & {\textnormal{no}}     &   {\textnormal{$|\inputlen| + 1$}}   &  {\textnormal{$\ceil{2^d/s}$}}   \\ 
		\hline
  \end{tabular}
	\captionsetup{justification=centering}
  \caption[Comparison of 1-round Private Decision Tree Protocols]{Comparison of 1-round PDTE protocols.}
  \label{Feature_of_1Round_PDTEP}
\end{center}
\end{table}

\section{Preliminaries}
\label{Preliminaries} 
In this section, we present the background concepts for the remainder of the paper. The core concept is fully/somewhat homomorphic encryption. For ease of exposition and understanding, we abstract away the mathematical technicalities behind homomorphic encryption and refer the reader to the relevant literature \cite{HEStandard2018, BrakerskiGV11, Gentry.2009, SmartV2014, ChillottiGGI16, ChillottiGGI17, TFHE, ChillottiGGI18, cuFHE, dai2015cuhe, sealcrypto}. 

\paragraph{Homomorphic Encryption}

A homomorphic encryption (HE) allows computations on ciphertexts by generating an encrypted result whose decryption matches the result of a function on the  plaintexts. In this paper, we focus on homomorphic encryption schemes (particularly lattice-based) that allow many chained additions and multiplications to be computed on plaintext homomorphically. In these schemes, the plaintext space is usually a ring $\mathbb{Z}_q[X]/(X^{\poldegree}+1)$, where $q$ is prime and $\poldegree$ might be a power of 2. A HE scheme consists of the following algorithms:

\begin{itemize}
\item $\pk, \sk, \ek \leftarrow \kgen(\secprm)$: This probabilistic algorithm takes a security parameter $\secprm$ and outputs 
public, private and evaluation key $\pk$, $\sk$ and $\ek$.
\item $\ctxt \leftarrow \enc(\pk, \ptxt)$: This probabilistic algorithm takes $\pk$ and a message $\ptxt$ and outputs a ciphertext $\ctxt$. We will use $ \lsem \ptxt \rsem$ as a shorthand notation for $\enc(\pk, \ptxt)$.
\item $\ctxt \leftarrow \eval(\ek, f, \ctxt_1, \ldots, \ctxt_n)$: This probabilistic algorithm takes $\ek$, an $n$-ary function $f$ and $n$ ciphertexts $\ctxt_1, \ldots \ctxt_n$ and outputs a ciphertext $\ctxt$.
\item $\ptxt' \leftarrow \dec(sk, \ctxt)$: This deterministic algorithm takes $\sk$ and a ciphertext $\ctxt$ and outputs a message $\ptxt'$.
\end{itemize}

\noindent We require IND-CPA and the following correctness conditions $\forall \ptxt_1, \ldots, \ptxt_n$ and $\vec{\ptxt} = \ptxt_1, \ldots, \ptxt_n$:
\begin{itemize}
	\item $\dec(\sk, \enc(\pk, \ptxt_i)) = \dec(\sk, \lsem \ptxt_i \rsem) = \ptxt_i,$
	\item $\dec(\sk, \eval(\ek, f, \lsem \ptxt_1 \rsem, \ldots, \lsem \ptxt_n \rsem)) = \\ \dec(\sk, \lsem f(\vec{\ptxt}) \rsem)$.
\end{itemize}

The encryption algorithm $\enc$ adds \enquote{noise} to the ciphertext which increases during homomorphic evaluation. While addition of ciphertexts increases the noise linearly, the multiplication increases it exponentially \cite{BrakerskiGV11}. If the noise become too large then correct decryption is no longer possible. To prevent this from happening, one can either keep the circuit's depth of the function $f$ low enough or use the \emph{refresh} algorithm. This algorithm consists either of a \emph{bootstrapping} procedure, which takes a ciphertext with large noise and outputs a ciphertext of the same message with a fixed amount of noise; or a \emph{key-switching procedure}, which takes a ciphertext under one key and outputs a ciphertext of the same message under a different key \cite{HEStandard2018}. In this paper, we will consider both bootstrapping and the possibility of keeping the circuit's depth low 
by designing our PDTE using so-called leveled fully homomorphic encryption.
A leveled fully homomorphic encryption (FHE) has an extra parameter $\nblevel$ such that the scheme can evaluate all circuits of depth at most $\nblevel$ without bootstrapping.

\paragraph{Homomorphic Operations}
We assume a BGV type homomorphic encryption scheme \cite{BrakerskiGV11}. Plaintexts can be encrypted using an integer representation (an integer $x_i$ is encrypted as $\ctxtrep{x_i}$) or a binary representation (each bit of the bit representation  $\bitrep{x_{i}} = x_{i\inputlen}\ldots x_{i1}$ is encrypted). 
We describe below homomorphic operations in the binary representation (i.e., arithmetic operations $\bmod~2$). They work similarly in the integer representation.

The FHE scheme might support Smart and Vercauteren's ciphertext packing (SVCP) technique \cite{SmartV2014} to pack many plaintexts in one ciphertext.
Using SVCP, a ciphertext consists of a fixed number $\slotnb$ of slots, each capable of holding one plaintext, i.e. $\lsem \cdot | \cdot | \ldots | \cdot \rsem$. The encryption of a bit $b$ replicates $b$ to all slots, i.e., $\lsem b \rsem = \lsem b | b | \ldots | b \rsem$. However, we can also pack the bits of $\bitrep{x_{i}}$ in one ciphertext and will denote it by $\lsem \vec{x}_i \rsem = \lsem x_{i\inputlen} | \ldots | x_{i1}|0|\ldots|0 \rsem$. 

The computation relies on some built-in routines, that allow homomorphic operations on encrypted data. The relevant routines for our scheme are: addition ($\sheadd$), multiplication ($\shemul$) and comparison ($\shecmp$). These routines are compatible with the ciphertext packing technique (i.e., operations are replicated on all slots in a SIMD manner).

The routine $\sheadd$ takes two or more ciphertexts and performs a component-wise addition modulo two, i.e., we have:
$$\sheadd( \lsem b_{i1} | \ldots | b_{i\slotnb} \rsem,  \lsem b_{j1} | \ldots | b_{j\slotnb} \rsem) =  \lsem b_{i1} \oplus b_{j1} | \ldots | b_{i\slotnb} \oplus b_{j\slotnb} \rsem.$$ 
Similarly, $\shemul$ performs component-wise multiplication modulo two, i.e., we have:
$$\shemul( \lsem b_{i1} | \ldots | b_{i\slotnb} \rsem,  \lsem b_{j1} | \ldots | b_{j\slotnb} \rsem) =  \lsem b_{i1} \cdot b_{j1} | \ldots | b_{i\slotnb} \cdot b_{j\slotnb} \rsem.$$ 
We will also denote addition and multiplication by $\oplus$ and $\odot$, respectively.

Let $x_i, x_j$ be two integers, $b_{ij} = [x_i > x_j]$ and $b_{ji} = [x_j > x_i]$, the routine $\shecmp$ takes  $\lsem \bitrep{x_{i}} \rsem,  \lsem \bitrep{x_{j}} \rsem$, compares $x_i$ and $x_j$ and returns $\lsem b_{ij} \rsem$, $\lsem b_{ji} \rsem$:
$$(\lsem b_{ij} \rsem, \lsem b_{ji} \rsem) \gets \shecmp(\lsem \bitrep{x_{i}} \rsem,  \lsem \bitrep{x_{j}} \rsem).$$
Note that, if the inputs to $\shecmp$ encrypt the same value, then the routine outputs two ciphertexts of 0. This routine implements the comparison circuit described in \cite{CheonKK15, CheonKK16, CheonKL15}.

If ciphertext packing is enabled, then
we also assume the HE supports shift operations. Given a packed ciphertext $\ctxtrep{b_1 | \ldots |b_{\slotnb}}$, the \emph{shift left} operation 
shifts all slots to the left by a given offset, using zero-fill, i.e., shifting $\ctxtrep{b_1 | \ldots |b_{\slotnb}}$ by $i$ positions returns $\ctxtrep{b_i | \ldots |b_{\slotnb} | 0 | \ldots | 0}$.
The \emph{shift right} operation is defined similarly for shifting to the right.

\section{Definitions}
\label{Definitions}

In this section, we introduce relevant definitions and notations for our scheme. Our definitions and notations 
are similar to previous work \cite{WuFNL.2015, CockDHKNNP.2016, TaiMZC.2017, TuenoKK19}. With $[a,b]$, we denote the set of all integers from $a$ to $b$. Let $c_0, \ldots, c_{k-1}$ be the classification labels, $k\in\mathbb{N}_{>0}$.

\begin{definition}[Decision Tree]
\label{Decision_Tree_Def}
A \emph{decision tree} (DT) is a function 
$\mathcal{T}:\mathbb{Z}^n \rightarrow \{c_0, \ldots, c_{k-1}\}$ 
that maps an 
\emph{attribute vector} $x=(x_0, \ldots, x_{n-1})$ to a finite set of \emph{classification labels}. The tree consists of: 
\begin{itemize}
	\item \emph{internal} or \emph{decision nodes} containing a test condition
	\item \emph{leave nodes} containing a classification label.
\end{itemize}
A \emph{decision tree model} consists of a decision tree and the following functions:
\begin{itemize}
	\item a function ~$\tfun$~ that assigns to each decision node a \emph{threshold} value, 
	$\tfun: [0, m-1] \rightarrow \mathbb{Z},$
	\item a function $\afun$ that assigns to each decision node an \emph{attribute index},  
	$\afun: [0, m-1] \rightarrow [0, n-1], ~\mbox{and}$
	\item a labeling function $\cfun$ that assigns to each leaf node a label,  
	$\cfun: [m, M-1] \rightarrow \{c_0, \ldots, c_{k-1}\}.$
\end{itemize}
The decision at each decision node is a \enquote{greater-than} comparison between the assigned threshold and attribute values, i.e., the decision at node $v$ is $[x_{\afun(v)} \geq \tfun(v)]$.
\end{definition}

\begin{definition}[Node Indices]
\label{Node_Indices_Def}
Given a decision tree, the \emph{index} of a node is its order as computed by \emph{breadth-first search (BFS)} traversal, starting at the root with index 0. If the tree is complete, then a node with index $v$ has left child $2v+1$ and right child $2v+2$.
\end{definition}

We will also refer to the node with index $v$ as the node $v$. W.l.o.g, we will use $[0, k-1]$ as classification labels (i.e., $c_j = j$ for $0\leq j \leq k-1$) and we will label the first (second, third, $\ldots$) leaf in BFS traversal with classification label 0 (1, 2, $\ldots$). For a complete decision tree with depth $d$, the leaves have indices ranging from $2^d, 2^d+1, \ldots 2^{d+1}-2$ and classification labels ranging from $0, \ldots, 2^d-1$ respectively. Since the classification labeling is now independent of the tree, we use $\mathcal{M} = (\mathcal{T}, \tfun, \afun)$ to denote a \emph{decision tree model} consisting of a tree $~\mathcal{T}$ and the labeling functions $\tfun, \afun$ as defined above. We also assume that the tree parameters $d, m, M$ can be derived from $~\mathcal{T}$.

\begin{definition}[Decision Tree Evaluation]
\label{Decision_Tree_Evaluation_Def}
Given $x=(x_0, \ldots, x_{n-1})$ and $\mathcal{M} = (\mathcal{T}, \tfun, \afun)$, then starting at the root, the \emph{Decision Tree Evaluation} (DTE) evaluates at each reached node $v$ the decision $b \leftarrow [x_{\afun(v)} \geq \tfun(v)]$ and moves either to the left (if $b = 0$) or right (if $b=1$) subsequent node. The evaluation returns the label of the reached leaf as result of the computation. We denote this by $\mathcal{T}(x)$.
\end{definition}

\begin{definition}[Private DTE]
\label{Private_Decision_Tree_Evaluation_Def}
Given a \emph{client} with a \emph{private} $x=(x_0, \ldots, x_{n-1})$ and a \emph{server} with a \emph{private} $\mathcal{M} = (\mathcal{T}, \tfun, \afun)$, a \emph{private DTE (PDTE)} functionality evaluates the model $\mathcal{M}$ on input $x$, then reveals to the client the \emph{classification label} $\mathcal{T}(x)$ and nothing else, while the server learns \emph{nothing}, i.e., 
$$\func{F}{PDTE}(\mathcal{M}, x) \rightarrow (\emptystring, \mathcal{T}(x)).$$
\end{definition}

\begin{definition}[Correctness]
\label{Correctness_Def}
Given a \emph{client} with a \emph{private} $x=(x_0, \ldots, x_{n-1})$ and a \emph{server} with a \emph{private} $\mathcal{M} = (\mathcal{T}, \tfun, \afun)$, a protocol $\Pi$ \emph{correctly implements} a PDTE functionality if after the computation it holds for the result $c$ obtained by the client that $c = \mathcal{T}(x)$.
\end{definition}
Besides correctness, parties must learn only what they are allowed to. To formalize this, we need the following two definitions.
A function $\mu: \mathbb{N} \rightarrow \mathbb{R}$ is \emph{negligible}, if for every positive polynomial $p(.)$ there exists an $\epsilon$ such that for all $n > \epsilon$: $\mu(n)<1/p(n)$. 
Two distributions $\mathcal{D}_1$ and $\mathcal{D}_2$ are \emph{computationally indistinguishable} (denoted $\mathcal{D}_1 \stackrel{c}{\equiv} \mathcal{D}_1$) if no probabilistic polynomial time (PPT) algorithm can distinguish them except with negligible probability. 

In SMC protocols, the \emph{view} of a party consists of its input and the sequence of messages that it has received during the protocol execution \cite{Goldreich.2004}. The protocol is said to be secure, if for each party, one can construct a simulator that, given only the input of that party and the output, can generate a distribution that is computationally indistinguishable to the party's view. 
\begin{definition}[PDTE Security]
\label{Semi-Honest_Security_Def}
Given a \emph{client} with a \emph{private} $x=(x_0, \ldots, x_{n-1})$ and a \emph{server} with a \emph{private} $\mathcal{M} = (\mathcal{T}, \tfun, \afun)$, a protocol $\prot{PDTE}$ \emph{securely implements} the PDTE functionality in the \emph{semi-honest model} if the following holds:
\begin{itemize}
	\item there exists a PPT algorithm $\simul{S}{PDTE}$ that simulates the server's view $\view{S}{\prot{PDTE}}$ given only the private decision tree model $(\mathcal{T}, \tfun, \afun)$ such that:
	\begin{eqnarray}
		\simul{S}{PDTE}(\mathcal{M}, \emptystring) & \stackrel{c}{\equiv} & \view{S}{\prot{PDTE}}(\mathcal{M}, x), \label{eq:PDTE_Security_Eqn_Server}
\end{eqnarray}
	\item there exists a PPT algorithm $\simul{C}{PDTE}$ that simulates the client's view $\view{C}{\prot{PDTE}}$ given only the depth $d$ of the tree, $x=(x_0, \ldots, x_{n-1})$ and a classification label $\mathcal{T}(x) \in \{0, \ldots, k-1\}$ such that:
	\begin{eqnarray}
		\simul{C}{PDTE}(\left\langle d, x\right\rangle, \mathcal{T}(x)) & \stackrel{c}{\equiv} & \view{C}{\prot{PDTE}}(\mathcal{M}, x). \label{eq:PDTE_Security_Eqn_Client}
\end{eqnarray}
\end{itemize}
A protocol $\prot{PDTE}$ \emph{securely implements} the PDTE functionality with \emph{one-sided simulation} if the following conditions hold:
\begin{itemize}
	\item for every pair $x, x'$ of different client's inputs, it holds:
	\begin{eqnarray}
		\view{S}{\prot{PDTE}}(\mathcal{M}, x) & \stackrel{c}{\equiv} & \view{S}{\prot{PDTE}}(\mathcal{M}, x'), \label{eq:PDTE_Security_OSS_Eqn_Server}
\end{eqnarray}
	\item  $\prot{PDTE}$ is simulatable against every PPT adversary controlling $C$.
\end{itemize}
\end{definition}
Note that for the one-sided simulation, the requirement in Equation \ref{eq:PDTE_Security_OSS_Eqn_Server} is that the protocol should be indistinguishable against any PPT adversary that controls the server. This means, the server should not be able to distinguish between the case where the client uses $x$ and the case where it uses $x'$. Moreover, the protocol should be simulatable against any adversary controlling the client \cite{HazayL10}. 
\section{The Basic Protocol}
\label{Basic_Protocol}
In this section, we present a modular description of our basic protocol. We start by describing the data structure.

\subsection{Data Structure}
We follow the idea of some previous protocols \cite{BostPTG.2015,TaiMZC.2017, CockDHKNNP.2016} of marking edges of the tree with comparison result. So if the comparison at node $v$ is the bit $b$ then we mark the right edge outgoing from $v$ with $b$ and the left edge with $1-b$. For convenience, we will instead store this information at the child nodes of $v$ and refer to it as $\cmp$.
\begin{definition}[Data Structure]
For a decision tree model $\mathcal{M} = (\mathcal{T}, \tfun, \afun)$, we let $\node$ be a data structure that for each node $v$ defines the following fields:
\begin{itemize}
	\item $v.\thr$ stores the threshold $\tfun(v)$ of the node $v$
	\item $v.\aindex$ stores the associated index $\afun(v)$
	\item $v.\pnode$ stores the pointer to the parent node which is null for the root node
	\item $v.\lnode$ stores the pointer to the left child node which is null for each leaf node
	\item $v.\rnode$ stores the pointer to the right child node which is null for each leaf node
	\item $v.\cmp$ is computed during the tree evaluation and stores the comparison bit 
	$$b \leftarrow [x_{\afun(v.\pnode)} \geq \tfun(v.\pnode)]$$ 
	if $v$ is a right node. Otherwise it stores $1-b$.
	\item $v.\clabel$ stores the classification label if $v$ is a leaf node and the empty string otherwise.
\end{itemize}
We use $\nodeset$ to denote the set of all decision nodes and $\leafset$ the set of all leave nodes of $\mathcal{M}$. As a result, we use the equivalent notation  $\mathcal{M} = (\mathcal{T}, \tfun, \afun) = (\nodeset, \leafset)$.
\end{definition}
With the data structure defined above, we now define the classification function as follows.

\begin{definition}[Classification Function]
Let the attribute vector be $x=(x_0, \ldots, x_{n-1})$ and the decision tree model be $\mathcal{M} = (\nodeset, \leafset)$. We define the \emph{classification function} to be 
$$\classfun(x, \mathcal{M}) = \travfun(x, \rootnode),$$
where $\rootnode$ is the root node and $\travfun$ is the \emph{traverse function} define as:
$$
\travfun(x, v) =
\begin{cases}
	\travfun(x, v.\lnode) & \mbox{if} ~ v \in \nodeset ~ \mbox{and} ~ x_{v.\aindex} < v.\thr \\
	\travfun(x, v.\rnode) & \mbox{if} ~ v \in \nodeset ~ \mbox{and} ~ x_{v.\aindex} \geq v.\thr \\
	v                     & \mbox{if} ~ v \in \leafset 
\end{cases}
$$
\end{definition}

\begin{lemma}
\label{thr:class_function_corretness}
Let $x=(x_0, \ldots, x_{n-1})$ be an attribute vector and $\mathcal{M} = (\mathcal{T}, \tfun, \afun)  = (\nodeset, \leafset)$ a decision model. We have 
$$\mathcal{T}(x) = b \cdot \travfun(x, \rootnode.\rnode) + (1-b) \cdot \travfun(x, \rootnode.\lnode),$$
where $b = [x_{\afun(\rootnode)} \geq \tfun(\rootnode)]$ is the comparison at the root node.
\end{lemma}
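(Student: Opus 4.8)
The plan is to reduce the claimed arithmetic identity to a two-case split on the Boolean comparison bit $b$ at the root, after first making explicit that $\mathcal{T}(x)$ and $\travfun(x, \rootnode)$ denote the same quantity. Comparing the definition of Decision Tree Evaluation with the definition of $\travfun$, both describe the same root-to-leaf walk: at an internal node $v$ one moves to $v.\lnode$ exactly when $x_{v.\aindex} < v.\thr$ (the case $b=0$) and to $v.\rnode$ exactly when $x_{v.\aindex} \geq v.\thr$ (the case $b=1$), and the label of the reached leaf is returned. Hence $\mathcal{T}(x) = \classfun(x, \mathcal{M}) = \travfun(x, \rootnode)$, and it suffices to show
$$\travfun(x, \rootnode) = b \cdot \travfun(x, \rootnode.\rnode) + (1-b)\cdot \travfun(x, \rootnode.\lnode).$$

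First I would fix the standing assumption that the root is a decision node, i.e. $\rootnode \in \nodeset$, so that $\rootnode.\rnode$ and $\rootnode.\lnode$ are well defined; this holds whenever the tree has depth at least one. Then I would unfold the definition of $\travfun$ at $\rootnode$: its value is $\travfun(x, \rootnode.\lnode)$ when $x_{\rootnode.\aindex} < \rootnode.\thr$ and $\travfun(x, \rootnode.\rnode)$ when $x_{\rootnode.\aindex} \geq \rootnode.\thr$. Since $b = [x_{\afun(\rootnode)} \geq \tfun(\rootnode)]$ takes the value $1$ in the latter case and $0$ in the former, these two conditions correspond precisely to $b=1$ and $b=0$.

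The core step is then a short case analysis. If $b=1$, the right-hand side equals $1\cdot \travfun(x, \rootnode.\rnode) + 0\cdot \travfun(x, \rootnode.\lnode) = \travfun(x, \rootnode.\rnode)$, which coincides with $\travfun(x, \rootnode)$ by the unfolding above; symmetrically, if $b=0$ the right-hand side collapses to $\travfun(x, \rootnode.\lnode) = \travfun(x, \rootnode)$. In both cases the identity holds, and combined with $\mathcal{T}(x) = \travfun(x, \rootnode)$ this proves the lemma.

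I do not expect any genuine obstacle here: the statement is a definitional unfolding together with the elementary fact that for $b\in\{0,1\}$ the expression $b\cdot\beta+(1-b)\cdot\gamma$ behaves as a selector, returning $\beta$ when $b=1$ and $\gamma$ when $b=0$. The only points warranting care are making the identification $\mathcal{T}(x)=\travfun(x,\rootnode)$ explicit, since it is otherwise left implicit between the two definitions, and the tacit assumption that the root is internal so that both children exist.
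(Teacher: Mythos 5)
Your proof is correct, but it takes a different route from the paper's. The paper proves the lemma by induction on the depth of the tree: the base case is a depth-one tree (root plus two leaves), and the inductive step joins two trees of depth $d$ under a fresh root. Your argument instead observes that the statement only ever refers to the root and its two children, so once you identify $\mathcal{T}(x)$ with $\travfun(x,\rootnode)$ — both definitions describe the same root-to-leaf walk — the identity follows from a single case split on $b\in\{0,1\}$, using that $b\cdot\beta+(1-b)\cdot\gamma$ acts as a selector. What the two approaches buy is complementary: your case analysis is shorter and makes clear that no induction is needed for the identity as literally stated; the paper's induction, on the other hand, is really carrying the burden you discharge by inspection, namely that the iterative evaluation of Definition \ref{Decision_Tree_Evaluation_Def} and the recursive function $\travfun$ agree on trees of arbitrary depth (and that the selector identity propagates below the root, which is what the aggregation algorithms later exploit). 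Your two points of care — the tacit identification of a leaf with its label, and the assumption $\rootnode\in\nodeset$ so that both children exist — are legitimate and are also left implicit in the paper.
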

\begin{proof}
The proof follows by induction on the depth of the tree. In the base case, we have a tree of depth one (i.e., the root and two leaves). In the induction step, we have two trees of depth $d$ and we joint them by adding a new root. 
\end{proof}

\subsection{Algorithms}

\paragraph{Initialization}
The Initialization consists of a one-time key generation. The client generates appropriate triple $(\pk, \sk, \ek)$ of public, private and evaluation keys for a homomorphic encryption scheme. Then the client sends $(\pk, \ek)$ to the server.  For each input classification, the client just encrypts its input and sends it to the server. To reduce the communication cost of sending client's input, one can use a trusted randomizer that does not take part in the real protocol and is not allowed to collaborate with the server. The trusted randomizer generates a list of random strings $r$ and sends the encrypted strings $\ctxtrep{r}$ to server and the list of $r$'s to the client. For an input $x$, the client then sends $x+r$ to the server in the real protocol. This technique is similar to the commodity based cryptography \cite{Beaver.1997} with the difference that the client can play the role of the randomizer itself and sends the list of $\ctxtrep{r}$'s (when the network is not too busy) before the protocol's start.

\paragraph{Computing Decision Bits}
The server starts by computing for each node $v \in \mathcal{D}$ the comparison bit $b \leftarrow [x_{\afun(v)} \geq \tfun(v)]$ and stores $b$ at the right child node ($v.\rnode.\cmp = b$) and $1-b$ at the left child node ($v.\lnode.\cmp = 1-b$). It is illustrated in Algorithm \ref{Eval_Nodes_Algo}.

\begin{figure}[tbp]
	\renewcommand{\figurename}{Algorithm}
	\begin{mdframed}
		\begin{algorithmic}[1]
		\Function {\evalnode}{$\nodeset, \ctxtrep{x}$}
			\ForEach {$v \in \nodeset$}
				\State $\ctxtrep{b} \gets \ctxtrep{[x_{v.\aindex} \geq v.\thr]} $
				\State $\ctxtrep{v.\rnode.\cmp} \gets \ctxtrep{b} $
				\State $\ctxtrep{v.\lnode.\cmp} \gets \ctxtrep{1-b} $
			\EndFor
		\EndFunction
		\end{algorithmic}
	\end{mdframed}
	\caption[Algorithm for Computing a Decision Bit]{Computing a Decision Bit}
	\label{Eval_Nodes_Algo}
\end{figure}

\paragraph{Aggregating Decision Bits}
Then for each leaf node $v$, the server aggregates the comparison bits along the path from the root to $v$. We implement it using a queue and traversing the tree in BFS as illustrated in Algorithm \ref{Eval_Paths_Algo}.

\begin{figure}[tbp]
	\renewcommand{\figurename}{Algorithm}
	\begin{mdframed}
		\begin{algorithmic}[1]
		\Function {\evalpath}{$\nodeset, \leafset$}
			\State let $Q$ be a queue
			\State $Q.\enQ(\rootnode)$
			\While  { $Q.\emptyQ() = \false $} 
				\State $v \gets Q.\deQ()$
				\State $\ctxtrep{v.\lnode.\cmp} \gets \ctxtrep{v.\lnode.\cmp} \odot \ctxtrep{v.\cmp}$, 
				\State $\ctxtrep{v.\rnode.\cmp} \gets \ctxtrep{v.\rnode.\cmp} \odot \ctxtrep{v.\cmp}$
				\If {$v.\lnode \in \nodeset$} 
					\State $Q.\enQ(v.\lnode)$
				\EndIf
				\If {$v.\rnode \in \nodeset$} 
					\State $Q.\enQ(v.\rnode)$
				\EndIf
			\EndWhile
		\EndFunction
		\end{algorithmic}
	\end{mdframed}
	\caption[Algorithm for Aggregating Decision Bits]{Aggregating Decision Bits}
	\label{Eval_Paths_Algo}
\end{figure}

\paragraph{Finalizing}
After Aggregating the decision bits along the path to the leave nodes, each leaf node $v$ stores either $v.\cmp = 0$ or $v.\cmp = 1$. Then, the server aggregates the decision bits at the leaves by computing for each leaf $v$ the value $\ctxtrep{v.\cmp} \odot \ctxtrep{v.\clabel}$ and summing all the results. This is illustrated in Algorithm \ref{Finalizing_Algo}.

\begin{figure}[tbp]
	\renewcommand{\figurename}{Algorithm}
	\begin{mdframed}
		\begin{algorithmic}[1]
		\Function {\evalleaves}{$\leafset$}
			\State $\ctxtrep{\result} \gets \ctxtrep{0}$
			\ForEach {$v \in \leafset$}
				\State $\ctxtrep{\result} \gets \ctxtrep{\result} \oplus (\ctxtrep{v.\cmp} \odot \ctxtrep{v.\clabel})$
			\EndFor
			\State \Return $\ctxtrep{\result}$
		\EndFunction
		\end{algorithmic}
	\end{mdframed}
	\caption[Algorithm for Aggregating Paths Results]{Finalizing}
	\label{Finalizing_Algo}
\end{figure}

\paragraph{Putting It All Together}
As illustrated in Algorithm \ref{Basic_PDTE_Algo}, the whole computation is performed by the server. It sequentially computes the algorithms described above and sends the resulting ciphertext to the client. The client decrypts and outputs the resulting classification label. The correctness is straightforward and follows from Lemma \ref{thr:class_function_corretness}. The algorithms are straightforward and easy to understand. However, their naive application is inefficient. 

\begin{figure}[tbp]
\renewcommand{\figurename}{Protocol}
\begin{center}
\fbox{
\pseudocode{%
\\
\textbf{ Client} \< \< \textbf{ Server}  \\[0.1\baselineskip][\hline]
 \<\< \\[-0.5\baselineskip]
\textbf{Input:}~ x  \<\< \textbf{Input:}~ \mathcal{M} = (\nodeset, \leafset)  \\
\textbf{Output:}~ \mathcal{T}(x) \<\< \textbf{Output:}~ \emptystring \\ [0.1\baselineskip][\hline]
\<\< \\[-0.5\baselineskip]
 \< \sendmessageright*[2cm]{\ctxtrep{x}} \< \\
  \<\< \evalnode(\nodeset, \ctxtrep{x}) \\
  \<\< \evalpath(\nodeset, \leafset) \\
  \<\< \ctxtrep{\mathcal{T}(x)} \gets \evalleaves(\leafset) \\
 \< \sendmessageleft*[2cm]{\ctxtrep{\mathcal{T}(x)}} \< \\
 }
}
\end{center}
\caption[Overview of Our 1-round PDTE]{The Basic Protocol}
\label{Basic_PDTE_Algo}
\end{figure}

\section{Binary Implementation}
\label{Binary_Implementation}
In this section, we describe $\pdtebin$, an instantiation of the basic scheme that requires encoding the plaintexts using their bit representation. Hence, ciphertexts encrypt bits and arithmetic operations are done $\bmod~2$.
\subsection{Input Encoding}
In this implementation, we encrypt plaintext bitwise. For each plaintext $x_i$ with bit representation $\bitrep{x_i} = x_{i\inputlen} \ldots x_{i1}$, we use $\ctxtrep{\bitrep{x_i}}$ to denote the vector $(\ctxtrep{x_{i\inputlen}}, \ldots, \ctxtrep{x_{i1}})$, consisting of encryptions of the bits of $x_i$. As a result, the client needs to send $n\inputlen$ ciphertexts for the $n$ attribute values. Unfortunately, homomorphic ciphertexts might be quite large. We can already use the trusted randomizer as explained before to send blinded inputs instead of ciphertexts in this phase. This, however, improves only the online communication. We additionally want to use the SVCP SIMD technique that allows to pack many plaintexts into the same ciphertext and manipulate them together during homomorphic operations.   

\subsection{Ciphertext Packing}
In the binary encoding, ciphertext packing means that each ciphertext encrypts $\slotnb$ bits, where $\slotnb$ is the number of slots in the ciphertext. Then we can use this property in three different ways. First, one could pack the bit representation of each classification label in a single ciphertext and allow the server to send back a single ciphertext to the client. Second, one could encrypt several attributes together and classify them with a single protocol evaluation. Finally, one could encrypt multiple decision node thresholds that must be compared to the same attribute in the decision tree model.
%


\paragraph{Packing Classification Label's Bits} Aggregating the decision bits using Algorithm \ref{Eval_Paths_Algo} produces for each leaf $v \in \leafset$ a decision bit $\ctxtrep{b_v}$ which encrypts 1 for the classification leaf and 0 otherwise. Moreover, because of SVCP, the bit $b_{v}$ is replicated to all slots. Now, let $k$ be the number of classification labels (i.e., $|\leafset| = k$) and its bitlength be $|k|$. For each $v \in \leafset$, we let $c_{v}$ denote the classification label $v.\clabel$ which is $|k|$-bit long and has bit representation $\bitrep{c_{v}} = c_{v|k|}\ldots c_{v1}$ with corresponding packed encryption  $\ctxtrep{\vec{c_{v}}} = \ctxtrep{c_{v|k|} | \ldots | c_{v1} | 0 | \ldots | 0}$. As a result, computing $\ctxtrep{b_v} \odot \ctxtrep{\vec{c_{v}}}$ for each leaf $v \in \leafset$ and summing over all leaves results in the correct classification label. Note that, this assumes that one is classifying only one vector and not many as in the the next case.

\paragraph{Packing Attribute Values} Let $x^{(1)}, \ldots, x^{(\slotnb)}$ be $\slotnb$ possible attribute vectors with $x^{(\slotct)} = [x_{1}^{(\slotct)}, \ldots, x_{n}^{(\slotct)}]$, $1 \leq \slotct \leq \slotnb$. For each $x_{i}^{(\slotct)}$, let $\bitrepx{x_i}{\slotct} = x_{i\inputlen}^{(\slotct)}, \ldots, x_{i1}^{(\slotct)}$ be the bit representation. Then, the client generates for each attribute $x_i$ the ciphertexts $\ctxtrep{cx_{i\inputlen}}, \ldots, \ctxtrep{cx_{i2}}, \ctxtrep{cx_{i1}}$ as illustrated in Equation \ref{eqn:CP_attribute_packing}.

\begin{equation}
\begin{aligned}
\ctxtrep{cx_{i1}} &= \lsem x_{i1}^{(1)} | x_{i1}^{(2)} | \ldots | x_{i1}^{(\slotnb)} \rsem  \\
\ctxtrep{cx_{i2}} &= \lsem x_{i2}^{(1)} | x_{i2}^{(2)} | \ldots | x_{i2}^{(\slotnb)} \rsem \\
& \ldots  \\
\ctxtrep{cx_{i\inputlen}} &= \lsem x_{i\inputlen}^{(1)} | x_{i\inputlen}^{(2)} | \ldots | x_{i\inputlen}^{(\slotnb)} \rsem 
\end{aligned}
\qquad \text{Manual Packing of $x_i$}
\label{eqn:CP_attribute_packing}
\end{equation}

To shorten the notation, let $y_j$ denote the threshold of the $j$-th decision node (i.e., $y_j = v_{j}.\thr$) and assume $v_{j}.\aindex = i$. The server just encrypts each threshold bitwise which automatically replicates the bit to all slots. This is illustrated in Equation \ref{eqn:CP_threshold_packing}.
\begin{equation}
\begin{aligned}
\ctxtrep{cy_{j1}} &= \lsem y_{j1} | y_{j1} | \ldots | y_{j1} \rsem \\
\ctxtrep{cy_{j2}} &= \lsem y_{j2} | y_{j2} | \ldots | y_{j2} \rsem \\
& \ldots \\
\ctxtrep{cy_{j\inputlen}} &= \lsem y_{j\inputlen} | y_{j\inputlen} | \ldots | y_{j\inputlen} \rsem
\end{aligned}
\qquad \text{Automatic Packing of $y_j$}
\label{eqn:CP_threshold_packing}
\end{equation}
Note that $(\ctxtrep{cy_{j\inputlen}}, \ldots, \ctxtrep{cy_{j1}}) = \ctxtrep{\bitrep{y_j}}$ holds because of SVCP.
The above described encoding allows to compare $\slotnb$ attribute values together with one threshold. This is possible because the routine $\shecmp$ is compatible with SVCP such that we have: 
\begin{equation}
\begin{aligned}
\shecmp((\ctxtrep{cx_{i\inputlen}}, \ldots, \ctxtrep{cx_{i1}}), (\ctxtrep{cy_{j\inputlen}}, \ldots, \ctxtrep{cy_{j1}})) = \\
(\ctxtrep{b_{ij}^{(1)} | b_{ij}^{(2)} | \ldots | b_{ij}^{(\slotnb)}}, \ctxtrep{b_{ji}^{(1)} | b_{ji}^{(2)} | \ldots | b_{ji}^{(\slotnb)}}),
\end{aligned}
\label{eqn:CP_packing_she_compare}
\end{equation}
where $b_{ij}^{(\slotct)} = [x_{i}^{(\slotct)} > y_{j}]$ and $b_{ji}^{(\slotct)} = [y_{j} > x_{i}^{(\slotct)}]$. This results in a single ciphertext such that the $\slotct$-th slot contains the comparison result between $x_{i}^{(\slotct)}$ and $y_{j}$.

Aggregating decision bits remains unchanged as described in Algorithm \ref{Eval_Paths_Algo}. This results in a packed ciphertext $\ctxtrep{b_v} = \ctxtrep{b_{v}^{(1)} | \ldots | b_{v}^{(\slotnb)}}$ for each leaf $v \in \leafset$, where $b_{v}^{(\slotct)} = 1$ if $x^{(\slotct)}$ classifies to leaf $v$ and $b_{u}^{(\slotct)} = 0$ for all other leaf $u \in \leafset - \{v\}$. 
%

For the classification label $c_{v}$ of a leaf $v \in \leafset$, let $\ctxtrep{\bitrep{c_{v}}} = (\ctxtrep{c_{v|k|}}, \ldots, \ctxtrep{c_{v1}})$ denote the encryption of the bit representation $\bitrep{c_{v}} = c_{v|k|}\ldots c_{v1}$.
To select the correct classification label Algorithm \ref{Finalizing_Algo} is updated as follows. We compute $\ctxtrep{c_{v|k|}} \odot \ctxtrep{b_v}, \ldots, \ctxtrep{c_{v1}} \odot \ctxtrep{b_v}$ for each leaf $v \in \leafset$ and sum them component-wise over all leaves. This results in the encrypted bit representation of the correct classification labels.

\paragraph{Packing Threshold Values}
In this case, the client encrypts a single attribute in one ciphertext, while the server encrypts multiple threshold values in a single ciphertext. Hence, for an attribute value $x_i$, the client generates the ciphertexts as in Equation \ref{eqn:SP_attribute_packing}. Let $m_i$ be the number of decision nodes that compare to the attribute $x_i$ (i.e., $m_i = |\{v_j \in \nodeset: v_j.\aindex = i\}|$). The server packs all corresponding threshold values in $\ceil{\frac{m_i}{s}}$ ciphertext(s) as illustrated in Equation \ref{eqn:SP_threshold_packing}.

\begin{equation}
\begin{aligned}
\ctxtrep{cx_{i1}} &= \lsem x_{i1} | x_{i1}| \ldots | x_{i1} \rsem  \\
\ctxtrep{cx_{i2}} &= \lsem x_{i2} | x_{i2} | \ldots | x_{i2} \rsem \\
& \ldots  \\
\ctxtrep{cx_{i\inputlen}} &= \lsem x_{i\inputlen} | x_{i\inputlen}| \ldots | x_{i\inputlen} \rsem 
\end{aligned}
\qquad \text{Automatic Packing of $x_i$}
\label{eqn:SP_attribute_packing}
\end{equation}

\begin{equation}
\begin{aligned}
\ctxtrep{cy_{j1}} &= \lsem y_{j_11} | \ldots | y_{j_{m_i}1} | \ldots \rsem \\
\ctxtrep{cy_{j2}} &= \lsem y_{j_12} | \ldots | y_{j_{m_i}2} | \ldots \rsem \\
& \ldots \\
\ctxtrep{cy_{j\inputlen}} &= \lsem y_{j_1\inputlen} | \ldots | y_{j_{m_i}\inputlen} | \ldots \rsem
\end{aligned}
\qquad \text{Manual Packing of $y_j$}
\label{eqn:SP_threshold_packing}
\end{equation}
The packing of threshold values allows to compare one attribute value against multiple threshold values together. Unfortunately, we do not have access to the slots while performing homomorphic operation. Hence, to aggregate the decision bits, we make $m_i$ copies of the resulting packed decision bits and shift left each decision bit to the first slot. Then the aggregation of the decision bits and the finalizing algorithm work as in the previous case with the only difference that only the result in the first slot matters and the remaining can be set to 0.


\subsection{Efficient Path Evaluation}

As explained above, the encryption algorithm $\enc$ adds noise to the ciphertext which increases during homomorphic evaluation. While addition of
ciphertexts increases the noise slightly, the multiplication increases it explosively \cite{BrakerskiGV11}. The noise must be kept low enough to prevent incorrect decryption. To keep the noise low, one can either keep the circuit’s depth low enough or use the refresh algorithm. In this section, we will focusing on keeping the circuit depth low.

\begin{definition}[Multiplicative Depth]
Let $f$ be a function and $C_f$ be a boolean circuit that computes $f$ and consists of $AND$-gates or multiplication (modulo 2) gates  and $XOR$-gates or addition (modulo 2) gates . The \emph{circuit depth} of $C_f$ is the maximal length of a path from an input gate to an output gate. The \emph{multiplicative depth} of $C_f$ is the path from an input gate to an output gate with the largest number of of multiplication gates.
\end{definition}

For example, consider the function $f([a_1, \ldots, a_n]) = \Pi_{i=1}^{n}a_i$. A circuit that successively multiplies the $a_i$ has multiplicative depth $n$. However, a circuit that divides the array in two halves, multiplies the elements in each half and finally multiplies the result, has multiplicative depth $\ceil{\frac{n}{2}}+1$. This gives the intuition for the following lemma.

\begin{lemma}[Logarithmic Multiplicative Depth Circuit] 
\label{thr:log_mult_depth}
Let $[a_1, \ldots, a_n]$ be an array of $n$ integers and $f$ be the function defined as follows:
$f([a_1, \ldots, a_n]) = [a'_1, \ldots a'_{\ceil{\frac{n}{2}}}]$
where
$$
a'_{i} =
\begin{cases}
	a_{2i-1} \cdot a_{2i} & \mbox{if} ~ (n \bmod 2 = 0) \vee (i < \ceil{\frac{n}{2}}), \\
	a_n & \mbox{if} ~  (n \bmod 2 = 1) \wedge (i = \ceil{\frac{n}{2}}). 
\end{cases}
$$
Moreover, let $f$ be the iterated function where $f^{i}$ is the $i$-th iterate of $f$ defined as follows:
$$
f^{i}([a_1, \ldots, a_n]) =
\begin{cases}
	[a_1, \ldots, a_n] & \mbox{if} ~ i = 0, \\
	f(f^{i-1}([a_1, \ldots, a_n])) & \mbox{if} ~  i \geq 1.
\end{cases}
$$
The $|n|$-th iterate $f^{|n|}$ of $f$ computes $\Pi_{i=1}^{n}a_i$ and has multiplicative depth $|n|-1$ if $n$ is a power of two and $|n|$ otherwise, where $|n|= \log n$ is the bitlength of $n$:
$$f^{|n|}([a_1, \ldots, a_n])=\Pi_{i=1}^{n}a_i$$
\end{lemma}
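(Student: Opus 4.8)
The plan is to prove both claims---that $f^{|n|}$ computes $\prod_{i=1}^{n}a_i$ and that its multiplicative depth is $|n|-1$ for $n$ a power of two and $|n|$ otherwise---by exhibiting a single loop invariant and establishing it by induction on the iteration count $t$. Writing $A^{(t)}=f^{t}([a_1,\ldots,a_n])$, the invariant I would track has three parts: (i) the \emph{length} $L_t=|A^{(t)}|$ equals $\ceil{n/2^{t}}$; (ii) each entry of $A^{(t)}$ is the product of a \emph{contiguous block} of the inputs, and these blocks partition $\{1,\ldots,n\}$ so that the first $L_t-1$ of them have size exactly $2^{t}$ and the last has size $n-(L_t-1)2^{t}\in[1,2^{t}]$; and (iii) an entry that is the product of a block of size $s$ has multiplicative depth exactly $\ceil{\log_2 s}$. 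Correctness and the depth bound both fall out of this invariant once the array has collapsed to a single entry.

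For \textbf{correctness} I would first note the one-step product-preservation $\prod_j (f(A))_j=\prod_j A_j$, which is immediate from the definition of $f$: each output slot is either a product of two input slots or a copied input slot, and every input slot is used exactly once. Part~(i) follows from $L_{t+1}=\ceil{L_t/2}$ together with the nested-ceiling identity $\ceil{\ceil{n/2^{t}}/2}=\ceil{n/2^{t+1}}$, giving $L_t=\ceil{n/2^{t}}$; hence $L_{t^{\ast}}=1$ for $t^{\ast}=\ceil{\log_2 n}$, at which point the unique surviving entry is, by product-preservation, $\prod_{i=1}^{n}a_i$. Since the bitlength satisfies $|n|=\lfloor\log_2 n\rfloor+1\ge\ceil{\log_2 n}=t^{\ast}$ always, and since $f$ fixes any length-one array (the odd/last-element branch of its definition), iterating up to $|n|$ times leaves this singleton unchanged, so $f^{|n|}$ indeed outputs the full product.

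For the \textbf{depth}, the heart of the argument is part~(ii): I would show by induction, splitting on the parity of $L_t$ and on whether $n\bmod 2^{t}=0$, that full-size blocks always precede the single partial block, so that when $f$ pairs adjacent entries it only ever merges a full block with a full block or a full block with the unique partial block---never two partial blocks. Granting this, part~(iii) is maintained under each merge: merging blocks of sizes $s_1\le s_2$ produces depth $1+\max(\ceil{\log_2 s_1},\ceil{\log_2 s_2})$, and in the two admissible cases ($2^{t}$ with $2^{t}$, or $2^{t}$ with $r<2^{t}$) the merged size lies in $(2^{t},2^{t+1}]$, whose ceiling-log is exactly $t+1$; a carried partial block keeps both its size and its depth. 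Thus the single entry of size $n$ reached at $t^{\ast}$ has multiplicative depth $\ceil{\log_2 n}$, and the at most one additional, multiplication-free iteration used to reach $f^{|n|}$ adds nothing. Finally I would translate $\ceil{\log_2 n}$ into the stated form using $|n|=\lfloor\log_2 n\rfloor+1$: it equals $|n|-1$ exactly when $n$ is a power of two and $|n|$ otherwise.

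The step I expect to be the main obstacle is precisely the exactness in part~(iii), i.e.\ the \emph{lower} bound on the depth. The upper bound ``depth $\le$ number of iterations'' is trivial, since each application of $f$ adds at most one multiplication level. What requires care is ruling out any cancellation of depth through the odd-length carries: one must verify, via the parity case analysis of part~(ii), that the partial block stays balanced against the full blocks so that every merge along the path to the root genuinely increases the depth, yielding $\ceil{\log_2 s}$ and not something smaller. A secondary point to handle cleanly is the boundary between $\ceil{\log_2 n}$ and $|n|$ at powers of two, where the lemma's use of $|n|$ rather than $\ceil{\log_2 n}$ iterations is justified only by the idempotence of $f$ on singletons.
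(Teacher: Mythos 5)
Your proposal is correct, and it takes a genuinely different route from the paper's proof. The paper argues top-down by structural induction on the array: it splits into two cases ($n$ a power of two versus not), handles the first by induction on the exponent via halving, and handles the second constructively by cutting the array at the largest power of two $n''<n$, recursing on the remainder (effectively following the binary expansion of $n$), and joining with one final multiplication to reach depth $|n''|=|n|$. You instead argue bottom-up by induction on the iteration counter $t$, with an invariant tracking the length $\lceil n/2^{t}\rceil$, the contiguous-block structure (all blocks of size $2^{t}$ except one trailing partial block), and the exact depth $\lceil\log_2 s\rceil$ of each block, postponing the power-of-two case split to the final arithmetic translation of $\lceil\log_2 n\rceil$ into $|n|-1$ or $|n|$. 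Each approach has its merits: the paper's proof is shorter and mirrors the recursive $\evalmul$ subroutine of Algorithm \ref{fig:eval_path_log_mult_depth} (which also splits at the largest power of two), so it doubles as a correctness argument for that implementation. Your proof works directly with the iterated adjacent-pairing definition of $f$ as stated in the lemma, and it makes explicit two points the paper leaves implicit: first, that the pairing never merges two partial blocks (full-size blocks stay even in number ahead of the unique partial block), which is exactly what lets the depth stay logarithmic rather than degrading; and second, that the possible surplus iteration when $n$ is a power of two (since $|n|=\lceil\log_2 n\rceil+1$ there) is harmless because $f$ fixes singletons. Your tracking of exact depth, rather than only an upper bound, is also slightly stronger than what the paper's argument delivers.
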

\begin{proof}
%
For the proof we consider two cases: $n$ is a power of two (i.e., $n=2^l$ for some $l$), and $n$ is not a power of two.
\paragraph{The Power of Two Case} The proof is inductive. Assume $n = 2^l$, we show by induction on $l$. The base case trivially holds. For the inductive step, we assume the statement holds for $n = 2^l$ and show it holds for $n' = 2^{l+1}$. By dividing the array $[a_1, \ldots, a_{n'}]$ in exactly two halves, the inductive assumption holds for each half. Multiplying the results of both halves concludes the proof.
\paragraph{The Other Case} The proof is constructive. Assume $n$ is not a power of two and let $n''$ be the largest power of two such that $n'' < n$, hence $|n''| = |n|$. We divide $[a_1, \ldots, a_n]$ in two halves $A_1 = [a_1, \ldots, a_{n''}]$ and $A' = [a_{n''+1}, \ldots, a_{n}]$. We do this recursively for $A'$ and get a set of subsets of $[a_1, \ldots, a_n]$ which all have a power of two number of elements. The claim then holds for each subset (from the power of two case above) and $A_1$ has the largest multiplicative depth which is $|n''|-1$. By joining the result from $A_1$ and $A'$, we get the product $\Pi_{i=1}^{n}a_i$ with one more multiplication resulting in a multiplicative depth of $|n''| = |n|$.
\end{proof}

		
		


\begin{figure}[tbp]
	\renewcommand{\figurename}{Algorithm}
	\begin{mdframed}
	\begin{algorithmic}[1]
		\Require {leaves set $\leafset$, decision nodes set $\nodeset$}
		\Ensure {Updated $v.\cmp$ for each $v \in \leafset$}
		\Function{\evalpathe}{$\leafset$, $\nodeset$}
		\ForEach {$v\in \leafset$}
		
		\State \textbf{let} $d$ = number of nodes on the path $(\rootnode \rightarrow v)$
		\State \textbf{let} $\path$ be an empty array of length $d$
		\State $l \gets d$
		\State $w \gets v$
		\While {$w \neq \rootnode$ } \Comment construct path to root
			\State $\path[l] \gets \ctxtrep{w.\cmp}$
			\State $l \gets l - 1$
			\State $w \gets w.\pnode$	
		\EndWhile
		
		\State $\ctxtrep{v.\cmp} \gets$ \Call{\evalmul}{$1, d, \path$}
		\EndFor
		\EndFunction
 	\end{algorithmic}

 	\begin{algorithmic}[1]
		\Require {integers $\from$ and $\tosf$, array of nodes $\path$}
		\Ensure {Product of elements in $\path$}
		\Function{\evalmul}{$\from, \tosf, \path$}
			\If {$\from \geq \tosf$}
				\State \Return $\path[\from]$
			\EndIf
			\State $n \gets \tosf - \from + 1$
			\State $\midsf \gets 2^{|n-1| - 1} + \from - 1$ \Comment $|n|$ bitlength of $n$
			\State $\ctxtrep{\leftsf} \gets $ \Call{\evalmul}{$\from, \midsf, \path$}
			\State $\ctxtrep{\rightsf} \gets $ \Call{\evalmul}{$\midsf+1, \tosf, \path$}
			\State \Return $\ctxtrep{\leftsf} \odot \ctxtrep{\rightsf}$
		\EndFunction
	\end{algorithmic}
	\end{mdframed}
	\caption[Algorithm for Path Aggregation  with log Multiplicative Depth]{Paths Evaluation with log Multiplicative Depth}
	\label{fig:eval_path_log_mult_depth}
\end{figure}

Now, we know that sequentially multiplying comparison results on the path to a leaf results in a multiplicative depth which is linear in the depth of tree and increase the noise explosively. Instead of doing the multiplication sequentially, we will therefore do it in such a way as to preserve a logarithmic multiplicative depth. This is described in Algorithm \ref{fig:eval_path_log_mult_depth}. 
Algorithm \ref{fig:eval_path_log_mult_depth} consists of a main function and a sub-function. The main function $\Call{\evalpathe}{}$ collects for each leaf $v$ encrypted comparison results on the path from the root to $v$ and passes it as an array to the sub-function $\Call{\evalmul}{}$ which is a \emph{divide and conquer} type. The sub-function follows the construction described in the proof of Lemma \ref{thr:log_mult_depth}. It divides the array in two parts (left and right) such that the left part has a power of two number of elements. Then it calls the recursion on the two part and returns the product of their results. 

The two functions in Algorithm \ref{fig:eval_path_log_mult_depth} correctly compute the multiplication of decision bits for each path. While highly parallelizable, it is still not optimal, as each path is considered individually. Since multiple paths in a binary tree share a common prefix (from the root), one would ideally want to handle common prefixes one time and not many times for each leaf. This can be solved using \emph{memoization} technique which is an optimization that stores results of expensive function calls such that they can be used latter if needed. Unfortunately, naive memoization would require a complex synchronization in a multi-threaded environment and  linear multiplicative depth. In the next paragraph, we propose a pre-computation on the tree, that would allow us to have the best of both worlds - multiplication with logarithmic depth along the paths, while reusing the result of common prefixes, thus, avoiding unnecessary work.

\subsection{Improving Path Evaluation with Pre-Computation}
The idea behind this optimization is to use directed acyclic graph which we want to define first.
\begin{definition}[DAG]
	A \emph{directed acyclic graph} (DAG) is a graph with directed edges in which there are no cycles. A vertex $v$ of a DAG is said to be \emph{reachable} from another vertex $u$ if there exists a non-trivial path that starts at $u$ and ends at $v$. The reachability relationship is a partial order $\leq$ and we say that two vertices $u$ and $v$ are ordered as $u \leq v$ if there exists a directed path from $u$ to $v$. 
\end{definition}
We require our DAGs to have a unique maximum element. The edges in the DAG define dependency relation between vertices. 
\begin{definition}[Dependency Graph]
	Let $\dagcfun{}$ be the function that takes two DAGs $G_1, G_2$ and returns a new DAG $G_3$ that connects the maxima of $G_1$ and $G_2$.
	We define the function $\dagfun([a_1, \ldots, a_n])$ that takes an array of integers and returns:
	\begin{itemize}
		\item a graph with a single vertex labeled with $a_1$ if $n = 1$
		\item $\dagcfun(\dagfun([a_1, \ldots, a_{n'}]), \dagfun([a_{n'+1}, \ldots, a_n]))$ if $n > 1$ holds, where $n' = 2^{|n|-1}$ and $|n|$ denotes the bitlength of $n$.
	\end{itemize}
	We call the DAG $G$ generated by $G = \dagfun([a_1, \ldots, a_n])$ a \emph{dependency graph}. For each edge $(a_i,a_j)$ in $G$ such that $i < j$, we say that $a_j$ \emph{depends} on $a_i$ and denote this by adding $a_i$ in the \emph{dependency list} of $a_j$. We require that if $L(j) = [a_{i_1}, \ldots, a_{i_{|L(j)|}}]$ is the dependency list of $a_j$ then it holds $i_1 > i_2 > \ldots i_{|L(j)|}$.
\end{definition}
An example of dependency graph generated by the function $\dagfun([a_1, \ldots, a_n])$ is illustrated in Figure \ref{fig:dagexample} for $n = 4$ and $n = 5$.

\begin{figure}[tbp]
\centering
\begin{tikzpicture}[->,>=stealth',shorten >=1pt,auto,node distance=1.5cm, semithick]
  \tikzstyle{every state}=[draw=black,text=black]

  \node[state]         (a1)              {$a_1$};
  \node[state]         (a2) [right of=a1] {$a_2$};
  \node[state]         (a3) [right of=a2] {$a_3$};
  \node[state]         (a4) [right of=a3] {$a_4$};

  \node at (a1) [above=0.5cm, draw=none] {$[]$};
  \node at (a2) [above=0.5cm, draw=none] {$[a_1]$};
  \node at (a3) [above=0.5cm, draw=none] {$[]$};
  \node at (a4) [above=0.5cm, draw=none] {$[a_3,a_1]$};

  \path (a1) edge [bend right] node {} (a2)
		(a2) edge [bend right = 45] node {} (a4)
        (a3) edge [bend right] node {} (a4);
 \end{tikzpicture}
 \begin{tikzpicture}[->,>=stealth',shorten >=1pt,auto,node distance=1.5cm, semithick]
   \tikzstyle{every state}=[draw=black,text=black]
  \node[state]         (a1)               {$a_1$};
  \node[state]         (a2) [right of=a1] {$a_2$};
  \node[state]         (a3) [right of=a2] {$a_3$};
  \node[state]         (a4) [right of=a3] {$a_4$};
  \node[state]         (a5) [right of=a4] {$a_5$};

  \node at (a1) [above=0.5cm, draw=none] {$[]$};
  \node at (a2) [above=0.5cm, draw=none] {$[a_1]$};
  \node at (a3) [above=0.5cm, draw=none] {$[]$};
  \node at (a4) [above=0.5cm, draw=none] {$[a_3,a_1]$};
  \node at (a5) [above=0.5cm, draw=none] {$[a_4]$};

  \path (a1) edge [bend right] node {} (a2)
		(a2) edge [bend right = 45] node {} (a4)
        (a3) edge [bend right] node {} (a4)
		(a4) edge [bend right] node {} (a5);
\end{tikzpicture}
\caption[Example Dependency Graph]{Dependency Graph for $n=4$ and $n=5$}
\label{fig:dagexample}
\end{figure}
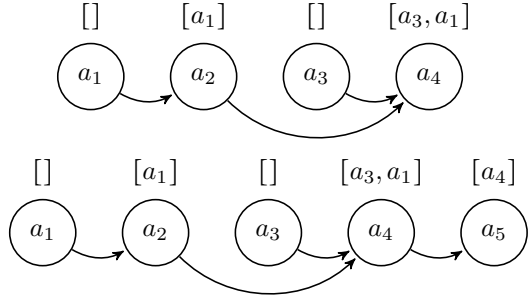

\begin{lemma}
\label{thr:dag_max_def}
Let $[a_1, \ldots, a_n]$ be an array of $n$ integers. Then $\dagfun([a_1, \ldots, a_n])$ as defined above generates a DAG whose maximum element is marked with $a_n$.
\end{lemma}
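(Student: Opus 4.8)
The plan is to argue by strong induction on the array length $n$, following exactly the recursive structure used to define $\dagfun$. For the base case $n = 1$, the definition gives that $\dagfun([a_1])$ is the single-vertex graph labeled $a_1$, whose unique maximum is trivially $a_1 = a_n$. So the base case is immediate and the whole content of the argument lives in the inductive step.

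For the inductive step ($n > 1$), I would unfold the definition as $\dagfun([a_1,\ldots,a_n]) = \dagcfun(G_1, G_2)$, where $G_1 = \dagfun([a_1,\ldots,a_{n'}])$ and $G_2 = \dagfun([a_{n'+1},\ldots,a_n])$ with $n' = 2^{|n|-1}$. The first thing to check is that $1 \le n' < n$, so that both subarrays are strictly shorter than the original; this is exactly what lets the strong induction hypothesis apply to each half. Re-indexing the second subarray as $[b_1,\ldots,b_{n-n'}]$ with $b_j = a_{n'+j}$, its last entry is $b_{n-n'} = a_n$, and since its length $n-n'$ satisfies $1 \le n-n' < n$, the induction hypothesis yields that the maximum of $G_2$ is $a_n$. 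In parallel, the hypothesis applied to $G_1$ gives $\max(G_1) = a_{n'}$.

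It then remains to show that merging via $\dagcfun$ keeps $\max(G_2) = a_n$ as the global maximum. Here I would use the edge-orientation convention stated in the definition, namely that an edge $(a_i,a_j)$ always runs from the lower index $i$ to the higher index $j$. By construction $\dagcfun(G_1,G_2)$ adds the single connecting edge from $\max(G_1) = a_{n'}$ to $\max(G_2) = a_n$; since $n' < n$ this edge respects the low-to-high orientation. Consequently every vertex of $G_1$ reaches $a_{n'}$, which reaches $a_n$, while every vertex of $G_2$ already reaches $a_n$; hence $a_n$ is reachable from every other vertex and is the unique maximum of $\dagcfun(G_1,G_2)$, closing the induction.

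The main obstacle is not the induction itself, which is routine, but pinning down the precise semantics of $\dagcfun$: the lemma holds exactly because the connecting edge is directed from the maximum of the \emph{left} DAG into the maximum of the \emph{right} DAG, so that the right-hand maximum — the one carrying $a_n$ — survives as the global maximum rather than being dominated. I would therefore make this orientation convention explicit at the start of the proof, and I would also verify separately that $1 \le n' < n$ holds for every $n > 1$ under the intended reading of $n' = 2^{|n|-1}$, since this is the one place where the actual value of $n'$, and not merely the fact that a split exists, is needed to guarantee that both recursive calls land on strictly smaller inputs.
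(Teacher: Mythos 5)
Your proposal is correct and takes essentially the same route as the paper: the paper gives no detailed argument, stating only that this lemma \enquote{follows by induction similar to Lemma \ref{thr:log_mult_depth}}, and your strong induction on $n$ along the recursive structure of $\dagfun$ --- apply the hypothesis to both halves, then note that $\dagcfun$ directs the connecting edge from the left maximum $a_{n'}$ into the right maximum $a_n$ --- is exactly that argument filled in. The two caveats you flag (the orientation of the connecting edge, and that $1 \le n' < n$ requires the intended rather than literal reading of $n' = 2^{|n|-1}$ when $n$ is a power of two) are genuine imprecisions in the paper's definition, and you resolve them in the way the paper's own example (Figure \ref{fig:dagexample}) and Algorithm \ref{fig:eval_path_log_mult_depth} confirm.
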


\begin{lemma}
\label{thr:dag_mult_dep_graph}
Let $[a_1, \ldots, a_n]$ be an array of $n$ integers, $G = \dagfun([a_1, \ldots, a_n])$ be a DAG as defined above and $L(j) = [a_{i_1}, \ldots, a_{i_{|L(j)|}}]$ be the dependency list of $a_j$. 
	\begin{figure}[tbp]
	\renewcommand{\figurename}{Algorithm}
	\centering
	\begin{mdframed}
	\begin{algorithmic}[1]
		\For {$j = 1$ \textbf{to} $j = n$}
			\For {$l = 1$ \textbf{to} $l = |L(j)|$}
				\State $a_j \gets a_j \cdot a_{i_l}$
			\EndFor
		\EndFor
	\end{algorithmic}
	\end{mdframed}
	\caption[Algorithm for Multiplication using Dependency Lists]{Multiplication using Dependency Lists}
	\label{fig:Mult_With_Dep_List}
	\end{figure}
Then Algorithm \ref{fig:Mult_With_Dep_List} computes $\Pi_{i=1}^{n}a_i$ and has a multiplicative depth of $\log(n)$. 
\end{lemma}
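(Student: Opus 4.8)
The plan is to prove the two assertions — that Algorithm~\ref{fig:Mult_With_Dep_List} leaves $\prod_{i=1}^{n} a_i$ in the cell $a_n$, and that its multiplicative depth is logarithmic — \emph{simultaneously} by strong induction on $n$, following the recursive definition of $\dagfun$. The backbone of the argument is Lemma~\ref{thr:dag_max_def}: since the maximum of $\dagfun([a_1,\ldots,a_n])$ is $a_n$, the full product must accumulate in $a_n$, and the same holds for every sub-array appearing in the recursion.

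First I would record the structural facts about the dependency lists induced by $\dagfun$. Writing $n'$ for the split point used by $\dagfun$ (the largest power of two below $n$, exactly as in the \enquote{Other Case} of Lemma~\ref{thr:log_mult_depth}), the graph decomposes into the left sub-DAG on $[a_1,\ldots,a_{n'}]$, the right sub-DAG on $[a_{n'+1},\ldots,a_n]$, and the single edge produced by $\dagcfun$ from the maximum of the left part — namely $a_{n'}$ — to the maximum of the whole — namely $a_n$. The key locality observation is that every cross edge of the construction enters \emph{only} the maximum of its range and originates from a strictly smaller index; consequently, for $j<n$ the list $L(j)$ seen by the global loop coincides with the one it would have in the stand-alone sub-DAG containing $j$, while $L(n)$ is exactly the right sub-DAG's list for $a_n$ with the entry $a_{n'}$ appended. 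Because the lists are ordered by strictly decreasing index and $n'$ is smaller than every index of the right part, $a_{n'}$ is processed \emph{last} in iteration $j=n$. Hence the loop over $j=1,\ldots,n'$ reproduces verbatim the execution on the left sub-DAG, and the loop over $j=n'+1,\ldots,n$ reproduces the execution on the right sub-DAG up to one final, extra multiplication of $a_n$ by $a_{n'}$.

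With this in place the induction is short. For correctness, the base case $n=1$ is immediate since $L(1)$ is empty. In the step, the inductive hypothesis gives $a_{n'}=\prod_{i=1}^{n'}a_i$ after iteration $n'$; since the only edge leaving $a_{n'}$ targets $a_n$, that cell is never written again before being read at $j=n$. The hypothesis applied to the right sub-DAG gives $a_n=\prod_{i=n'+1}^{n}a_i$ after its internal entries are processed, so the final multiplication by $a_{n'}$ yields $\prod_{i=1}^{n}a_i$, as required. For the depth, let $\Delta(m)$ denote the multiplicative depth of the value held by the maximum of a length-$m$ sub-array once the algorithm has finished it, with $\Delta(1)=0$. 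Just before the last multiplication, $a_n$ has depth $\Delta(n-n')$ and $a_{n'}$ has depth $\Delta(n')$, so that single multiplication gives the recurrence $\Delta(n)=1+\max\{\Delta(n'),\Delta(n-n')\}$. Since $n'$ is a power of two with $n-n'\le n'$, the maximum is dominated by $\Delta(n')$ and the recurrence solves to exactly the bound of Lemma~\ref{thr:log_mult_depth}, namely $|n|-1$ when $n$ is a power of two and $|n|$ otherwise, i.e.\ $\lceil\log_2 n\rceil$, which is the $\log n$ of the statement up to the paper's rounding convention. Equivalently, one may observe that the multiplication DAG realized by the dependency lists is \emph{identical} to the balanced tree of Lemma~\ref{thr:log_mult_depth} and invoke that lemma directly.

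The step I expect to be the crux is controlling the \emph{in-place, sequential} accumulation inside a single dependency list. The list $L(n)$ has length $\Theta(\log n)$, and multiplying $a_n$ by its entries one after another is, syntactically, a chain of that length; a careless analysis would charge it linear depth. The point to nail down is that the decreasing-index ordering makes the depth of the $\ell$-th multiplier equal to the running depth of $a_n$ at that moment, so each multiplication raises the depth by exactly one and the total stays logarithmic — precisely what the recurrence $\Delta(n)=1+\max\{\Delta(n'),\Delta(n-n')\}$ encodes once locality is in hand. Thus the two facts that genuinely require care are (i) the locality claim, that sub-DAG executions compose without interference so the value read across the cross edge is exactly the completed left product $\prod_{i=1}^{n'}a_i$, and (ii) that the cross multiplication by $a_{n'}$ is always the last operation applied to the maximum; both follow from the decreasing-index ordering of the dependency lists together with the maximum characterization of Lemma~\ref{thr:dag_max_def}.
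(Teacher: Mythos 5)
Your proof is correct and takes essentially the same route as the paper: the paper's entire proof is the one-line remark that Lemmas \ref{thr:dag_max_def} and \ref{thr:dag_mult_dep_graph} \enquote{follow by induction similar to Lemma \ref{thr:log_mult_depth}}, and your strong induction along the recursive split of $\dagfun$ (left sub-DAG on $[a_1,\ldots,a_{n'}]$, right sub-DAG on $[a_{n'+1},\ldots,a_n]$, one cross edge $a_{n'} \to a_n$) is precisely that induction, carried out in full. The details you supply — the locality claim that sub-DAG executions compose without interference, and the observation that the decreasing-index ordering of the dependency lists makes each in-place multiplication raise the accumulator's depth by exactly one, yielding the recurrence $\Delta(n)=1+\max\{\Delta(n'),\Delta(n-n')\}$ — are exactly what the paper leaves implicit, so your writeup is a faithful (and more rigorous) expansion rather than a genuinely different argument.
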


The proofs of Lemmas \ref{thr:dag_max_def} and \ref{thr:dag_mult_dep_graph} follow by induction similar to Lemma \ref{thr:log_mult_depth}.  Before describing the improved path evaluation algorithm, we first extend our $\node$ data structure by adding to it a new field representing a stack denoted $\mdag$, that stores the dependency list. Moreover, we group the nodes of the decision tree by level and use an array denoted $\level[]$, such that $\level[0]$ stores a pointer to the root and $\level[i]$ stores pointers to the child nodes of $\level[i-1]$ for $i\geq 1$. Now, we are ready to describe the improved path evaluation algorithm which consists of a pre-computation step and an online step.

The pre-computation is a one-time computation that depends only on the structure of the decision tree and requires no encryption. As described in Algorithm \ref{alg:precom_multDAG}, its main function $\Call{\computedag}{}$ uses the leveled structure of the tree and the dependency graph defined above to compute the dependency list of each node in the tree (i.e., the DAG defined above). The sub-function $\Call{\addedge}{}$ is used to actually add nodes to the dependency list of another node (i.e., by adding edges between these nodes in the DAG). 

The online step is described in Algorithm \ref{alg:evalpathmultDAG}. It follows the idea of Algorithm \ref{fig:Mult_With_Dep_List} by multiplying decision bit level-wise depending on the dependency lists. The correctness follows from Lemma \ref{thr:dag_mult_dep_graph}.

\begin{figure}[tbp]
	\renewcommand{\figurename}{Algorithm}
	\begin{mdframed}
	\begin{algorithmic}[1]
		\Require {integers $\up$ and $\low$}
		\Ensure {Computed $v.\mdag$ for each $v \in \nodeset \cup \leafset$}
		\Function{\computedag}{$\up, \low$}
			
			\If {$\up \geq \low$}  
				\State \Return \Comment end the recursion
			\EndIf
			\State $\eta \gets \low - \up + 1$
			\State $\midsf \gets 2^{|\eta-1|-1} - 1 + \up$ \Comment $|\eta|$ bitlength of $\eta$ 
			\ForEach {$v \in \level[\low]$} 
				\State $\Call{\addedge}{v, \low, \midsf}$
			\EndFor
			\For {$i = \midsf + 1$ \textbf{to} $\low - 1$} \Comment non-deepest leaves
				\ForEach {$v \in \level[i] \cap \leafset$}  
					\State $\Call{\addedge}{v, i, \midsf}$
				\EndFor
			\EndFor
			\State $\Call{\computedag}{\up, \midsf}$
			\State $\Call{\computedag}{\midsf + 1, \low}$
		\EndFunction
	\end{algorithmic}

	\begin{algorithmic}[1]
		\Require {Node $v$, integers $\currLvl$ and $\destLvl$}
		\Ensure {Updated $v.\mdag$}
		\Function{\addedge}{$v$, $\currLvl$, $\destLvl$}
			\State $w \gets v$
			\While {$\currLvl > \destLvl$}
				\State $w \gets w.\pnode $
				\State $\currLvl \gets \currLvl - 1$
			\EndWhile
			\State $v.\mdag.\push(w)$ \Comment $\mdag$ is a stack 
		\EndFunction
	\end{algorithmic}
\end{mdframed}
	\caption[Algorithm for Pre-computation of Multiplication DAG]{Pre-computation of Multiplication DAG}
 	\label{alg:precom_multDAG}
\end{figure}

\begin{figure}[tbp]
	\renewcommand{\figurename}{Algorithm}
	\begin{mdframed}
	\begin{algorithmic}[1]
		\Require {set of nodes stored by level in array $\level$ }
		\Ensure {Updated $v.\cmp$ for each $v \in \leafset$}
		\Function{\evalpathp}{} 
		
		\For {$i = 1$ \textbf{to} $d$} \Comment from top to bottom level
			\ForEach {$v \in \level[i]$}
			\While {$v.\mdag.\emptyQ() = false$} \Comment $\mdag$ = stack
				\State $w \gets v.\mdag.\pop()$
				\State $\ctxtrep{v.\cmp} \gets \ctxtrep{v.\cmp} \odot \ctxtrep{w.\cmp}$
			\EndWhile
			\EndFor	
		\EndFor
		
		\EndFunction
	\end{algorithmic}
\end{mdframed}
	\caption[Algorithm for Aggregating Decision Bits With Pre-Computation]{Aggregate Decision Bits with precomputed DAG}
	\label{alg:evalpathmultDAG}
\end{figure}

\section{Arithmetic Implementation}
\label{Arithmetic_Implementation}

In this section, we describe $\pdteint$, an instantiation of the basic scheme that encodes the plaintexts such that the computation is done using an arithmetic circuit. This means that a ciphertext now encrypts an integer and that arithmetic operations are no longer $\bmod~2$, but $\bmod~2^{l}$ for some $l > 2$.

\subsection{Arithmetic Integer Comparison}

We first describe our modified version of the Lin-Tzeng comparison protocol~\cite{LinT05}.
The main idea of their construction is to reduce the greater-than comparison to the set intersection problem of prefixes. Let $x_i$ and $y_j$ be inputs of client and server, respectively, with the goal to compute $[x_i > y_j]$.
\paragraph{Input Encoding}
Let $\textsc{Int}(z_{\eta}\cdots z_{1}) = z$ be a function that takes a bit string of length $\eta$ and parses it into the $\eta-$bit integer $z = \sum_{l=1}^{\eta}z_{l} \cdot 2^{l-1}$. 
The \emph{0-encoding} $V_{x_i}^{0}$ and \emph{1-encoding} $V_{x_i}^{1}$ of an integer input $x_i$ are the following vectors:
$V_{x_i}^{0} = (v_{i\inputlen}, \cdots, v_{i1}), V_{x_i}^{1} = (u_{i\inputlen}, \cdots, u_{i1})$, such that $ \forall l \in \{1 \ldots \inputlen\}$  
\[
v_{il}=
		\begin{cases}
		\textsc{Int}(x_{i\inputlen}x_{i\inputlen-1}\cdots x_{il'}1) & \mbox{if} ~ x_{il} = 0\\  
		r^{(0)}_{il}                         & \mbox{if} ~ x_{il} = 1
		\end{cases} 
\]

\[		
u_{il}=
		\begin{cases}
		\textsc{Int}(x_{i\inputlen}x_{i\inputlen-1}\cdots x_{il}) & \mbox{if} ~ x_{il} = 1\\  
		r^{(1)}_{il}                      & \mbox{if} ~ x_{il} = 0,
		\end{cases} 
\]

\noindent where $l'=l+1$, and $r^{(0)}_{il}$, $r^{(1)}_{il}$ are random numbers of a fixed bitlength $\nu > \inputlen$ (e.g. $2^{\inputlen} \leq r^{(0)}_{il}, r^{(1)}_{il} < 2^{\inputlen+1}$) with $LSB(r^{(0)}_{il}) = 0$ and  $LSB(r^{(1)}_{il}) = 1$ (LSB is the least significant bit). If the \textsc{Int} function is used the compute the element at position $l$, then we call it a \emph{proper encoded element} otherwise we call it a \emph{random encoded element}. Note that a random encoded element  $r^{(1)}_{il}$ at position $l$ in the 1-encoding of $x_i$ is chosen such that it is guaranteed to be different to a proper or random encoded element at position $l$ in the 0-encoding of $y_j$, and vice versa. Hence, it enough if $r^{(1)}_{il}$ and $r^{(0)}_{il}$ are one or two bits longer than any possible proper encoding element at position $l$. Also note that the bitstring $x_{i\inputlen}x_{i\inputlen-1}\cdots x_{il}$ is interpreted by the function $\textsc{Int}$ as a bitstring $z_{\inputlen-l+1}\cdots z_{1}$ with length $\mu-l+1$ where $z_{1} = x_{il}, z_{2} = x_{i(l+1)}, \ldots, z_{\mu-l+1} = x_{i\mu}$.
If we see $V_{x_i}^{0}, V_{y_j}^{1}$ as sets, then  $x_i > y_j$ iff they have exactly one common element.

\begin{lemma}
\label{LinTzengLemma}
 Let $x_i$ and $y_j$ be two integers, then $x_i > y_j ~\mbox{iff}~  V = V_{x_i}^{1} - V_{y_j}^{0}$ has a unique position with 0.
\end{lemma}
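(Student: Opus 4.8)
The plan is to prove the sharper statement that the difference vector $V=V_{x_i}^{1}-V_{y_j}^{0}=(u_{i\inputlen}-v_{j\inputlen},\ldots,u_{i1}-v_{j1})$ has exactly one zero entry when $x_i>y_j$ and no zero entry otherwise; the biconditional of the lemma then follows immediately. Since $V$ vanishes in position $l$ exactly when $u_{il}=v_{jl}$, it suffices to decide, for each $l\in\{1,\ldots,\inputlen\}$, when this equality can hold and to count the solutions.

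First I would eliminate every match that involves a \emph{random} encoded element, which is where the whole encoding trick pays off. A proper encoded element at position $l$ is $\textsc{Int}$ applied to a bit string of length $\inputlen-l+1\le\inputlen$, hence its value is strictly below $2^{\inputlen}$; by contrast each random element has fixed bitlength $\nu>\inputlen$, so its value is at least $2^{\nu-1}\ge 2^{\inputlen}$. Thus a random element can never equal a proper one. Two random elements cannot match either: any random entry of $V_{x_i}^{1}$ is some $r^{(1)}_{il}$ with $\mathrm{LSB}=1$, while any random entry of $V_{y_j}^{0}$ is some $r^{(0)}_{jl}$ with $\mathrm{LSB}=0$. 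Consequently $u_{il}=v_{jl}$ forces both entries to be proper, which by the definition of the encodings means $x_{il}=1$ and $y_{jl}=0$. In that case $u_{il}=\textsc{Int}(x_{i\inputlen}\cdots x_{il})$ and $v_{jl}=\textsc{Int}(y_{j\inputlen}\cdots y_{j(l+1)}1)$ are the integer values of two bit strings of the \emph{same} length $\inputlen-l+1$, so they are equal iff the strings coincide. Since $x_{il}=1$ matches the appended trailing $1$, this is equivalent to the prefix condition $x_{i\inputlen}\cdots x_{i(l+1)}=y_{j\inputlen}\cdots y_{j(l+1)}$.

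Collecting this, a zero of $V$ at position $l$ is precisely a position at which $x_i$ and $y_j$ agree on all more significant bits and satisfy $x_{il}=1,\ y_{jl}=0$; that is, $l$ is the most significant bit at which the two integers differ, with $x_i$ carrying the $1$. I would then observe that at most one such position exists: if $l'<l$ were a second one, its defining condition would force agreement on all bits above $l'$, in particular on bit $l$, contradicting $x_{il}=1\neq 0=y_{jl}$. Finally, by the standard lexicographic characterization of the integer order, such a position exists exactly when the most significant differing bit is a $1$ in $x_i$, i.e.\ exactly when $x_i>y_j$; when $x_i\le y_j$ (either equal, or with the top differing bit being a $0$ in $x_i$) no such position occurs. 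Hence $V$ has a unique zero iff $x_i>y_j$.

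The delicate step is the second one: essentially all the content is in verifying that random encoded elements never collide with any other entry, so that the only zeros of $V$ arise from genuine prefix matches. The magnitude constraint ($\nu>\inputlen$) rules out random-versus-proper collisions and the $\mathrm{LSB}$ constraints rule out random-versus-random collisions; these two design choices are exactly what make the counting argument clean. The remaining parts—the length-matching observation that turns integer equality into bit-string equality, and the uniqueness of the most significant differing bit—are routine once the spurious matches are excluded.
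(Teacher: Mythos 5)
Your proof is correct and rests on exactly the same ingredients as the paper's own argument: random encoded elements can never collide with proper or random ones (by the bitlength and LSB constraints), a proper--proper collision at position $l$ occurs precisely when $l$ is the most significant differing bit with $x_{il}=1$ and $y_{jl}=0$, and such a position exists iff $x_i > y_j$. The only difference is organizational---you characterize the zero set of $V$ once and then count, which yields the slightly sharper fact that $V$ has no zero at all when $x_i \le y_j$ and makes explicit the exclusion of random collisions that the paper's forward direction uses implicitly; the paper instead argues the two implications separately with the same case analysis.
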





\paragraph{The Protocol}
Let $\lsem V_{x_i}^{0} \rsem = \lsem v_{i\inputlen} \rsem, \ldots, \lsem v_{i1} \rsem$ (respectively $\lsem V_{x_i}^{1} \rsem = \lsem u_{i\inputlen} \rsem, \ldots, \lsem u_{i1} \rsem$) denote the componentwise encryption of $V_{x_i}^{0}$ (resp. $V_{x_i}^{1}$). The client sends $\lsem V_{x_i}^{0} \rsem, \lsem V_{x_i}^{1} \rsem$ to the server. To determine the comparison result for $x_i > y_j$, the server evaluates the function $\textsc{LinCompare}(\lsem V_{x_i}^{1} \rsem, \lsem V_{y_j}^{0} \rsem)$ (Algorithm \ref{Lin_Compare_Algo}) which returns $\inputlen$ ciphertexts among which exactly one encrypts zero if an only if $x_i > y_j$. For the decision tree evaluation, the server omits the randomization in Step \ref{LTT_rand_step} and the random permutation in Step \ref{LTT_perm_choice_step}, since this not the final result. Moreover, the server collects the difference ciphertexts $c_l$ in an array and uses the multiplication algorithm with logarithmic multiplicative depth.

\paragraph{Difference to the original protocol}
In contrast to the original protocol of Lin and Tzeng \cite{LinT05}, we note the following differences:
\begin{itemize}
	\item Additively HE instead of multiplicative: As explained above multiplication increases the noise exponentially while addition increases it only linearly.
	\item The \textsc{Int} function: Instead of relying on a collision-free hash function as Lin and Tzeng \cite{LinT05}, we use the \textsc{Int} function which is simpler to implement and more efficient as it produces smaller values.
	\item The choice of random encoded elements $r^{(0)}_{il}$, $r^{(1)}_{il}$: We choose the random encoded elements as explained above and encrypt them, while the original protocol uses ciphertexts chosen randomly in the ciphertext space. 
	\item Encrypting the encodings on both side: In the original protocol, the evaluator has access to $y_j$ in plaintext and does not need to choose random encoded elements. By encoding as explained in our modified version, we can encrypt both encodings and delegate the evaluation to a third party which is not allowed to have access to the inputs in plaintext.
	\item Aggregation: The multiplication of the ciphertexts returned by Algorithm \ref{Lin_Compare_Algo} returns a ciphertext encrypting either 0 or a random number. 
\end{itemize}


\begin{figure}[tbp]
	\renewcommand{\figurename}{Algorithm}
	\begin{mdframed}
		\begin{algorithmic}[1]
		\Function {\textsc{LinCompare}}{$\lsem V_{x_i}^1 \rsem, \lsem V_{y_j}^0 \rsem$}
				\State \textbf{parse} $\lsem V_{x_i}^1 \rsem$ \textbf{as} $\lsem u_{i\inputlen} \rsem, \ldots, \lsem u_{i1} \rsem$
				\State \textbf{parse} $\lsem V_{y_j}^0 \rsem$ \textbf{as} $\lsem v_{i\inputlen} \rsem, \ldots, \lsem v_{i1} \rsem$
				\For {$l := 1$ \textbf{to}  $\inputlen$} 
					\State \textbf{choose} a random $r_l$ from the plaintext space \label{rand_choice_step}
					\State $c_{l} = \lsem (u_{il} - v_{jl}) \cdot r_l \rsem$ \label{LTT_rand_step}
				\EndFor 
				\State \textbf{choose} a random permutation $\pi$ \label{LTT_perm_choice_step} 
				\State  \Return $\pi(c_{\inputlen}, \cdots, c_{1})$
		\EndFunction
		\end{algorithmic}
	\end{mdframed}
	\caption[Modified Lin-Tzeng Comparison Protocol]{Modified Lin-Tzeng Comparison Protocol}
	\label{Lin_Compare_Algo}
\end{figure}
The modified comparison algorithm as used for PDTE is illustrated in Algorithm \ref{Lin_Compare_Algo_DT}.
Note that, this can be computed using binary gates as well, by encrypting the 0/1-encodings binary-wise resulting in $\inputlen$ blocks of ciphertexts, computing XOR-gates in parallel for each block, then computing  OR-gates in parallel for each block and finally summarizing the results using AND-gates. The multiplicative depth will be $2\inputlen$.

\subsection{Arithmetic PDTE Protocol}

In this section, we use the modified Lin-Tzeng comparison explained above for the decision tree evaluation. We follow the structure of the basic protocol as describe in Protocol \ref{Basic_PDTE_Algo}.

\paragraph{Encrypting the Atttribute Values} 
The protocol starts with the client encrypting and sending its input to the server. For each attribute value $x_i$ the client sends the encryptions $\lsem V_{x_i}^{0} \rsem = \lsem v_{i\inputlen} \rsem, \ldots, \lsem v_{i1} \rsem$ and $\lsem V_{x_i}^{1} \rsem = \lsem u_{i\inputlen} \rsem, \ldots, \lsem u_{i1} \rsem$) of the 0-encoding $V_{x_i}^{0}$ and 1-encoding $V_{x_i}^{1}$ of $x_i$. Note that, this is still compatible with the trusted randomizer technique, where we will use sequences of integers instead of bit strings.

\paragraph{Evaluating Decision Nodes and Paths}
Let $y_j$ be the threshold of a decision node that compares to $x_i$. We assume that $x_i \neq y_j$ for all $i,j$. The parties can ensure this by having the client adding a bit 0 the bit representation of each $x_i$, and the server adding a bit 1 to the bit representation of each $y_j$ before encoding the values. Then from the definition of the tree evaluation, we move to the right if $[x_i \geq y_j]$ or the left otherwise. This is equivalent of testing $[x_i > y_j]$ or $[y_j > x_i]$, since we assume $x_i \neq y_j$. Therefore, for each decision node $y_j$ with corresponding attribute $x_i$, the server uses $\Call{LinCompareDT}{\lsem V_{x_i}^1 \rsem, \lsem V_{y_j}^0 \rsem}$ to mark the edge right to $y_j$ and $\Call{LinCompareDT}{\lsem V_{y_j}^1 \rsem, \lsem V_{x_i}^0 \rsem}$ to mark the edge left to $y_j$. As a result, one edge will be marked with a ciphertext of 0, while the other will be marked with a ciphertext of a random plaintext. It follows that the sum of marks along each path of the tree, will result to an encryption of 0 for the classification path and an encryption of a random plaintext for other paths.

\paragraph{Computing the Result's Ciphertext}
To reveal the final result to the client, we do the following. For each  ciphertext $\ctxtrep{\cost_v}$ of Algorithm \ref{fig:Arithmetic_PDTE}, the server chooses a random number $r_v$, computes $\ctxtrep{result_v} \gets \ctxtrep{\cost_v \cdot r_v + v.\clabel}$ and sends the resulting ciphertexts to the client in a random order.  Alternatively, the server can make a trade-off between communication and computation by using the shift operation to pack many $result_v$ in a single ciphertext.

\paragraph{Using Ciphertext Packing}
Recall that our modified Lin-Tzeng comparison requires only component-wise subtraction and a multiplication of all components. Therefore, the client can pack the 0-encoding of each $x_i$ in one ciphertext and sends $\ctxtrep{v_{i\inputlen} | \ldots | v_{i1} | 0 | \ldots | 0}$ instead of $\ctxtrep{V_{x_i}^{0}}$ (and similar for the 1-encoding). Then the server does the same for each threshold value and evaluates the decision node by computing the differences $\ctxtrep{d_{ij}} \gets \ctxtrep{u_{i\inputlen}-v_{j\inputlen} | \ldots | u_{i1}-v_{j1} | 0 | \ldots | 0}$ with one homomorphic subtraction. To multiply the $\inputlen$ relevant components in $\ctxtrep{d_{ij}}$, we use $|\inputlen|$ (bitlength of $\inputlen$) left shifts and $|\inputlen|$ multiplications to shift $\Pi_{l=1}^{\inputlen}(u_{il}-v_{jl})$ to the first slot. The path evaluation and the computation of the result's ciphertext remain as explained above.
We also note that the packing of attribute values and the packing of threshold values work similar to the binary implementation of Section \ref{Binary_Implementation}.

\begin{figure}[tbp]
	\renewcommand{\figurename}{Algorithm}
	\begin{mdframed}
		\begin{algorithmic}[1]
		\Function {\textsc{LinCompareDT}}{$\lsem V_{x_i}^1 \rsem, \lsem V_{y_j}^0 \rsem$}
				\State \textbf{parse} $\lsem V_{x_i}^1 \rsem$ \textbf{as} $\lsem u_{i\inputlen} \rsem, \ldots, \lsem u_{i1} \rsem$
				\State \textbf{parse} $\lsem V_{y_j}^0 \rsem$ \textbf{as} $\lsem v_{i\inputlen} \rsem, \ldots, \lsem v_{i1} \rsem$
				\State \textbf{let} $arr$ be an empty array of size $\inputlen$ 
				\For {$l := 1$ \textbf{to}  $\inputlen$} 
					\State $arr[l] \gets \lsem u_{il} - v_{jl} \rsem$ \label{LTT_DT_enqueue_step}
				\EndFor
				\State  \Return $\Call{\evalmul}{1, \inputlen, arr}$ \label{LTT_DT_LogMlt_step}
		\EndFunction
		\end{algorithmic}
	\end{mdframed}
	\caption[Modified Lin-Tzeng Protocol for PDTE]{Modified Lin-Tzeng Protocol for PDTE}
	\label{Lin_Compare_Algo_DT}
\end{figure}

\begin{figure}[tbp]
	\renewcommand{\figurename}{Algorithm}
	\centering
	\begin{mdframed}
	\begin{algorithmic}[1]
		\State $\ctxtrep{\rootnode.\cmp} \gets \ctxtrep{0}$
		\ForEach {$v \in \nodeset$}
			\State $\ctxtrep{v.\rnode.\cmp} \gets \Call{LinCompare}{\lsem V_{x_i}^1 \rsem, \lsem V_{y_j}^0 \rsem}$
			\State $\ctxtrep{v.\lnode.\cmp} \gets \Call{LinCompare}{\lsem V_{y_j}^1 \rsem, \lsem V_{x_i}^0 \rsem}$
		\EndFor

		\ForEach {$v \in \leafset$}
			\State \textbf{let} $P_v$ be the array of nodes on the path $(\rootnode \rightarrow v)$
			\State $\ctxtrep{\cost_v} \gets \ctxtrep{0}$
			\ForEach {$u \in P_v$}
				\State $\ctxtrep{\cost_v} \gets \ctxtrep{\cost_v} \oplus \ctxtrep{u.\cmp}$
			\EndFor
		\EndFor
	\end{algorithmic}
	\end{mdframed}
	\caption[Arithmetic PDTE Algorithm]{Arithmetic PDTE Algorithm}
	\label{fig:Arithmetic_PDTE}
\end{figure}

\section{Evaluation}
\label{Evaluation}
In this section, we discuss some implementation details and evaluate our schemes.

\subsection{Implementation Details}
We implemented our algorithms using HElib \cite{HaleviS14} and TFHE \cite{TFHE, ChillottiGGI18}. 
HElib is a C++ library that implements FHE. The current version includes an implementation of the leveled FHE BGV scheme \cite{BrakerskiGV11}. HElib also includes various optimizations that make FHE runs faster, including the Smart-Vercauteren ciphertext packing (SVCP) techniques \cite{SmartV2014}. 

TFHE is a C/C++ library that implements FHE proposed by Chillotti et al.~\cite{ChillottiGGI16, ChillottiGGI17}. It allows to evaluate any boolean circuit on encrypted data. The current version  implements a very fast gate-by-gate bootstrapping, i.e., bootstrapping is performed after each gate evaluation. Future versions will include leveled FHE and ciphertext packing as described by Chillotti et al.~\cite{ChillottiGGI17a}. Dai and Sunar \cite{cuFHE, dai2015cuhe} propose an implementation of TFHE on CUDA-enabled GPUs that is 26 times faster.

We evaluated our implementation on an AWS instance with Intel(R) Xeon(R) Platinum 8124M CPU @ 3.00GHz running Ubuntu 18.04.2 LTS. The Instance has 36 CPUs, 144 GB Memory and 8 GB SSD. As the bottleneck of our scheme is the overhead of the homomorphic computation, we focus on the computation done by the server. We start by generating appropriate encryption parameters and evaluating the performance of basic operations.

\subsection{Basic Operations}
Recall that our plaintext space is usually a ring $\mathbb{Z}_q[X]/(X^{\poldegree}+1)$ and that the encryption scheme might be a leveled FHE with parameter $\nblevel$. For HElib, the parameters $\poldegree$ and $\nblevel$ determines how to generate encryption keys for a security level $\secprm$ which is at least 128 in all our experiments. Table \ref{tab:tablehelibcontext} summarizes the parameters we used for key generation and the resulting sizes for encryption keys and ciphertexts.
We will refer to it as \emph{homomorphic context} or just \emph{context}.
For HElib, one needs to choose $L$ large enough than the depth of the circuit to be evaluated and then computes an appropriate value for $\poldegree$ that ensures a security level at least 128. 
We experimented with tree different contexts ($\helibsmall, \helibmed, \helibbig$) for the binary representation used in $\pdtebin$ and the context $\helibint$ for the integer representation used in $\pdteint$. 
For TFHE, the default value of $\poldegree$ is 1024 and the security level can be chosen up to 128 while $\nblevel$ is infinite because of the gate-by-gate bootstrapping. We used the context $\tfheIZB$ to evaluate $\pdtebin$ with TFHE. Table \ref{tab:encdec_runtime} reports the average runtime for encryption and decryption over 100 runs. The columns \enquote{Enc Vector} and \enquote{Dec Vector} stand for encryption and decryption using SIMD encoding and decoding, which is not supported by TFHE yet.

\begin{table}[tbp]
	\begin{center}
		
		\begin{tabular}{c| c | c | c| c | c | c | c }
			\textbf{Name}&\textbf{$\nblevel$} & \textbf{$\poldegree$} & \textbf{$\secprm$}& \textbf{Slots} &\textbf{\sk} & \textbf{\pk} & \textbf{Ctxt} \\ 
			& &  & (bits)&  & (MB) & (MB) & (MB)  \\ 
			\hline
			$\helibsmall$ &200  & 13981 & 151& 600 & 52.2 & 51.6 & 1.7 \\
			\hline			
			$\helibmed$ &300 & 18631 & 153& 720 & 135.4  & 134.1 & 3.7 \\
			\hline
			$\helibbig$&500 & 32109 & 132& 1800 &  370.1& 367.1 & 8.8 \\
			\hline
			$\helibint$&450 & 24793 & 138.161& 6198 &  370.1& 367.1 & 8.8 \\
			\hline
			$\tfheIZB$ & $\infty$ & 1024 &128& 1 &  82.1& 82.1 & 0.002			
		\end{tabular}
		\caption{Key Generation's Parameters and Results}
		\label{tab:tablehelibcontext}
	\end{center}
\end{table}

\begin{table}[tbp]
	\begin{center}
		
		\begin{tabular}{ c | c  c | c   c }
			HElib & Enc Single & Enc Vector & Dec Single & Dec Vector\\ 
			Context& (ms) & (ms)  & (ms) & (ms) \\ 
			\hline 
			$\helibsmall$ &59.21& 59.41 & 26.08 & 26.38 \\
			$\helibmed$  & 124.39& 124.93 &54.31 & 54.92\\
			$\helibbig$ & 283.49 & 284.31 & 127.11 & 128.32 \\
			$\helibint$ & 323.41 & 488.77 & 88.63 & 93.50 \\
			$\tfheIZB$ & 0.04842 & n/a & 0.00129 & n/a \\
		\end{tabular}
		\caption{Encryption/Decryption Runtime}
		\label{tab:encdec_runtime}
	\end{center}
\end{table}

\subsection{Homomorphic Operations in HElib}
The opposite to the notion of \emph{ciphertext noise} is the notion of \emph{ciphertext capacity} or just \emph{capacity} which is also determined by $\nblevel$ and estimates the \emph{capacity} of a ciphertext to be used in homomorphic operations. In Figure \ref{fig:helib_add_noise} and \ref{fig:helib_mul_noise}, we reported the remaining capacity of a ciphertext after a number of consecutive additions or multiplications starting from the values of $\nblevel$ in Table \ref{tab:tablehelibcontext}. They show that the capacity is reduced only slightly after addition, but explosively after multiplication. Note that the encryption operation already has an impact on $\nblevel$. Figure \ref{fig:log_mul} shows that doing the multiplication with logarithmic depth (Lemma \ref{thr:log_mult_depth}) reduced the capacity sublinearly instead of linearly as in Figure \ref{fig:helib_mul_noise}. The sublinear complexity is also illustrated in Figure \ref{fig:cmp_isolated_bench} for the comparison circuit which also has a logarithmic multiplicative depth \cite{CheonKK15, CheonKK16, CheonKL15}. Figure \ref{fig:runtimeaddmul} illustrates runtimes (best and average) for addition and multiplication in HElib over 100 runs showing that homomorphic addition is really fast compare to multiplication. We report the runtime for the comparison circuit in Figure \ref{fig:cmp_runtime} comparing runtime for HElib and TFHE. While both are linear in the bitlength, the runtime for HElib increases very quickly.

\begin{figure}[tbp]
	\begin{center} 
		\includegraphics[width=.4\textwidth]{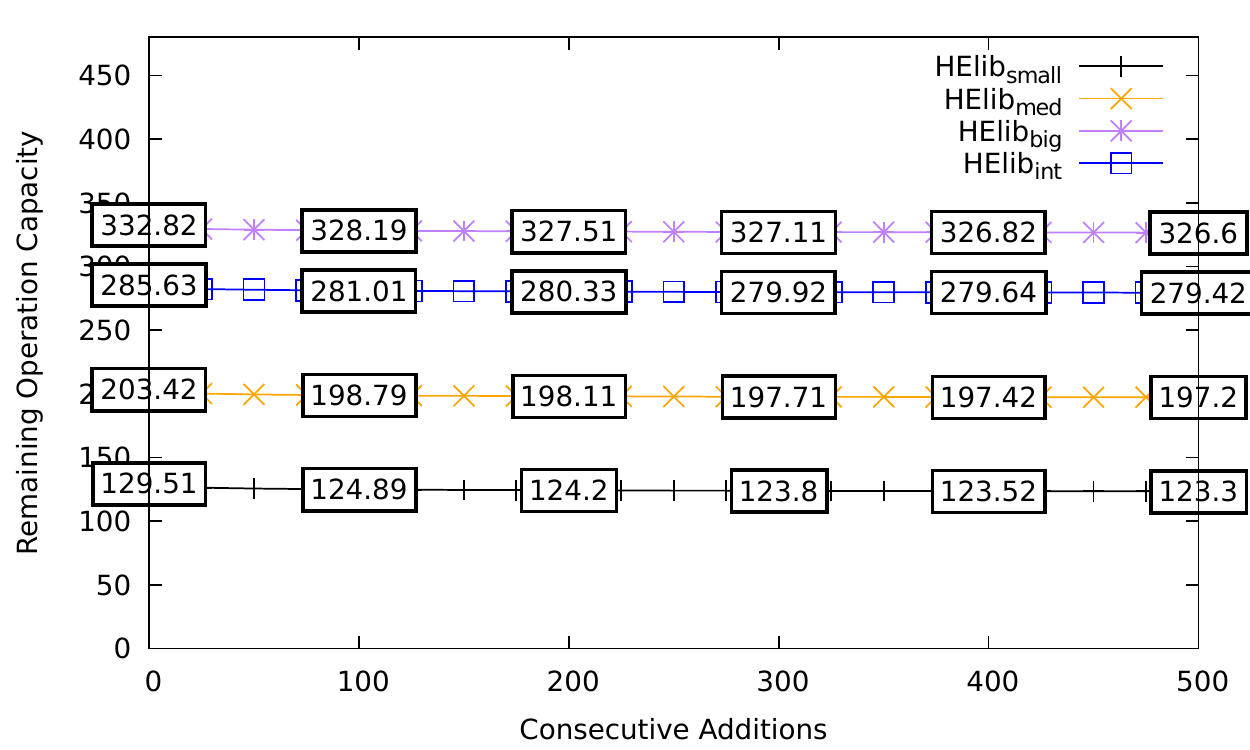}
		\caption{Capacity after consecutive additions}
		\label{fig:helib_add_noise}
	\end{center}
\end{figure}

\begin{figure}[tbp]
	\begin{center} 
		\includegraphics[width=.4\textwidth]{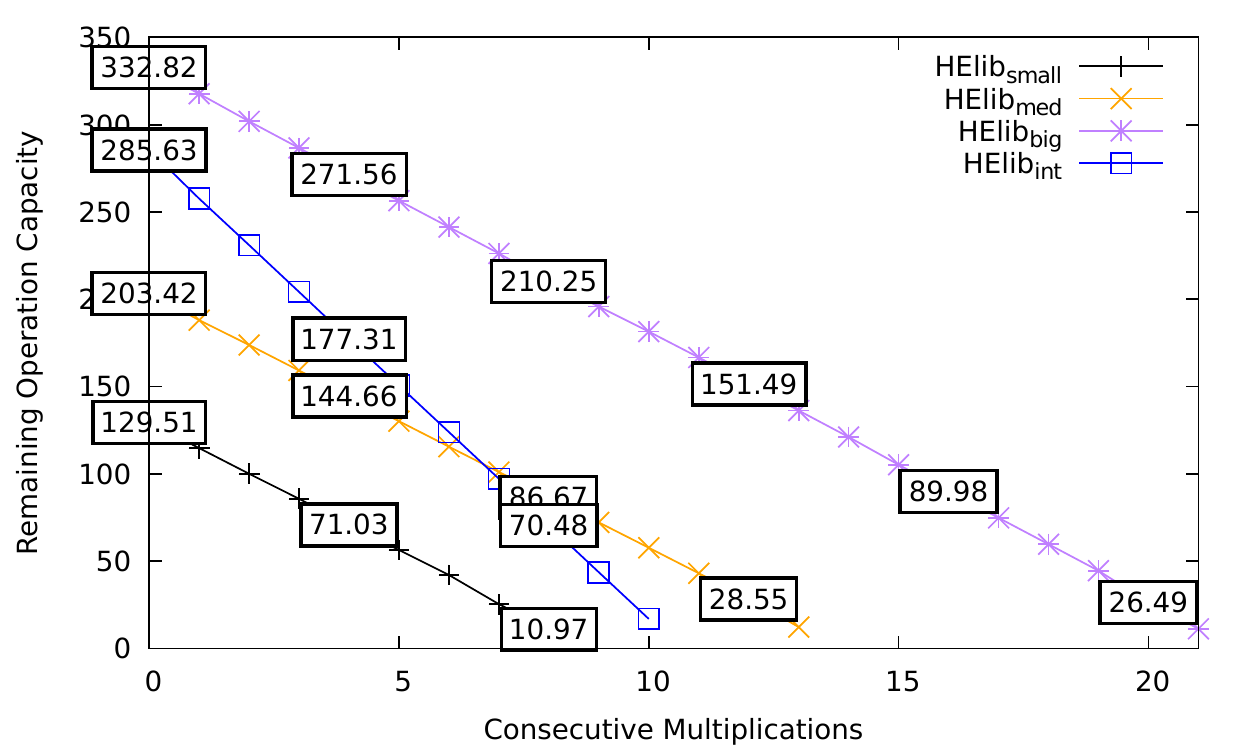}
		\caption{Capacity after consecutive multiplications}
		\label{fig:helib_mul_noise}
	\end{center}
\end{figure}

\begin{figure}[tbp]
	\begin{center} 
		\includegraphics[width=.4\textwidth]{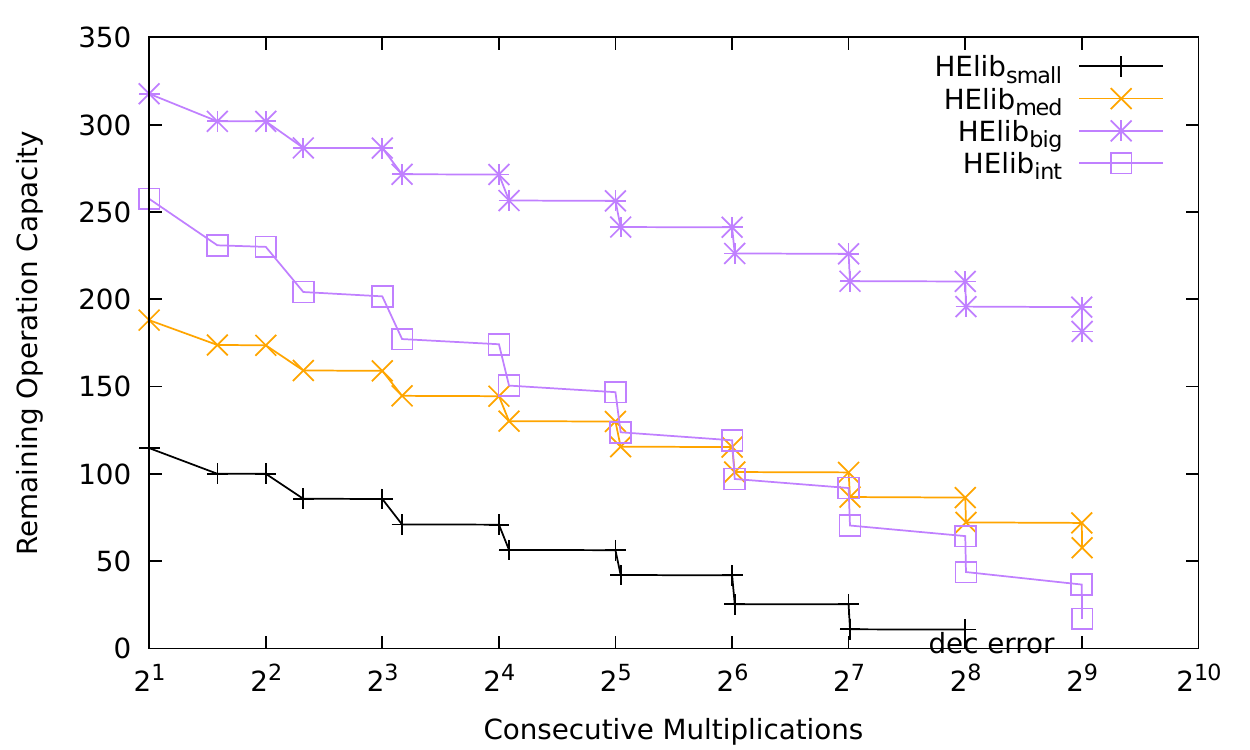}
		\caption{Capacity after multiplication with $\log$ Depth}
		\label{fig:log_mul}
	\end{center}
\end{figure}

\begin{figure}[tbp]
	\begin{center} 
		\includegraphics[width=.4\textwidth]{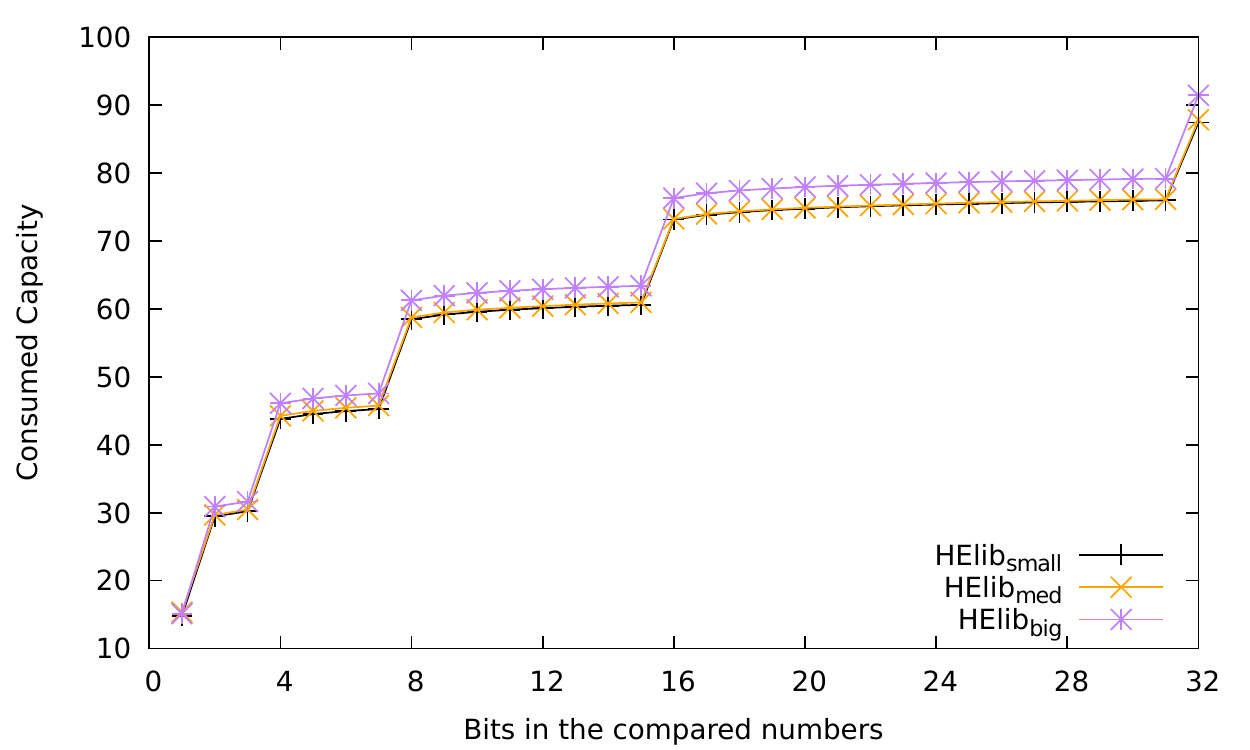}
		\caption{Comparison Capacity Consumption in HElib}
		\label{fig:cmp_isolated_bench}
	\end{center}
\end{figure}

\begin{figure}[tbp]
	\begin{center} 
		\includegraphics[width=.5\textwidth]{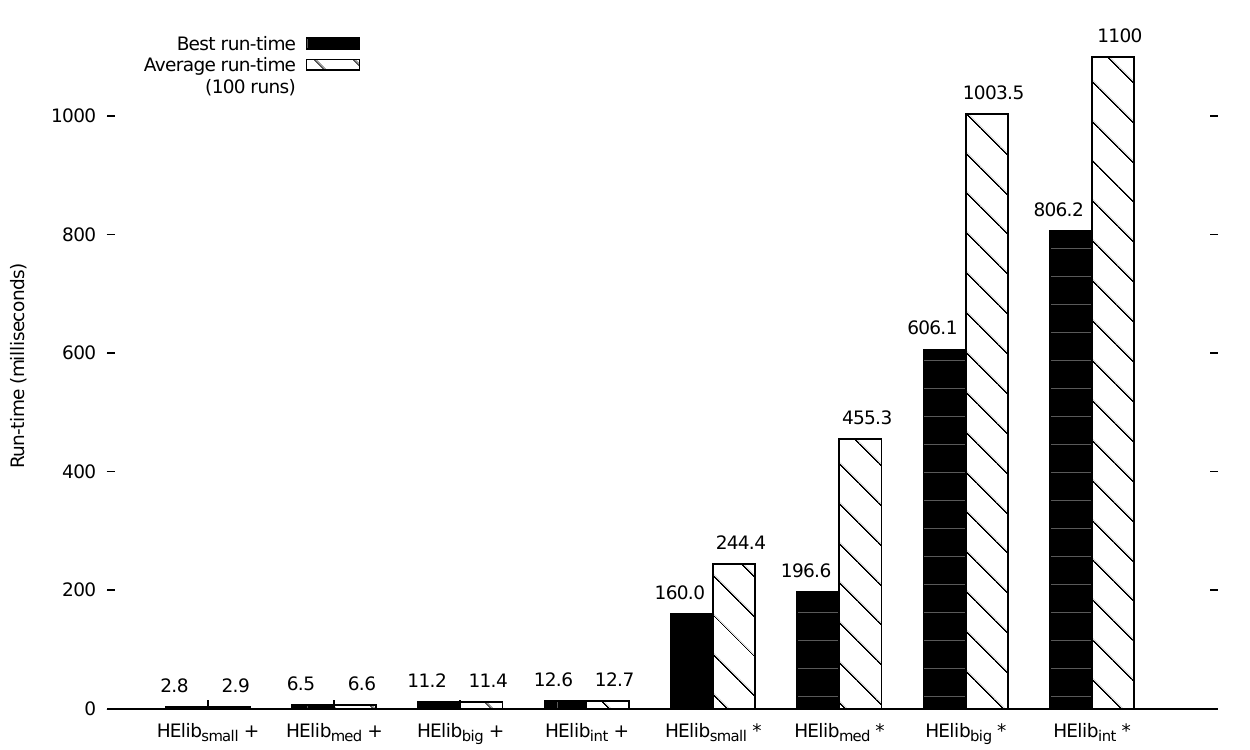}
		\caption{Runtime for Addition and Multiplication in HElib}
		\label{fig:runtimeaddmul}
	\end{center}
\end{figure}

\begin{figure}[tbp]
	\includegraphics[width=.4\textwidth]{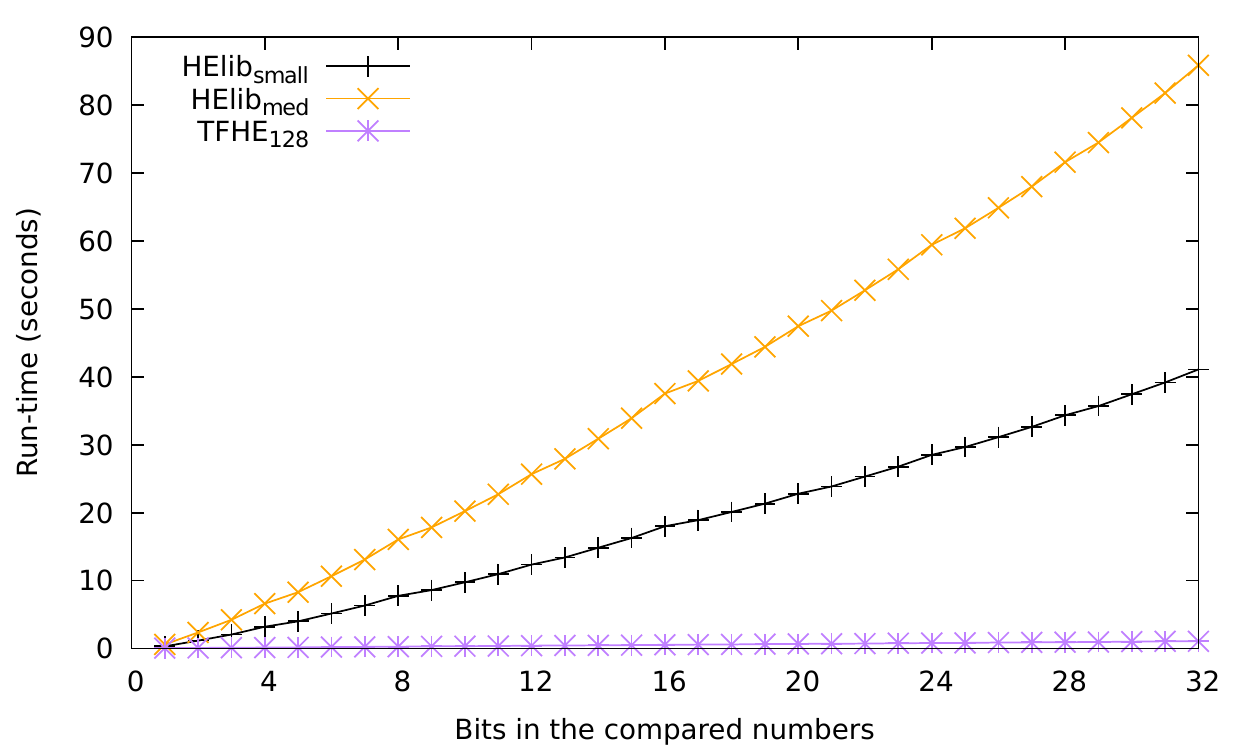}
	\caption{Comparison Run-time Cost}
	\label{fig:cmp_runtime}
\end{figure}

\subsection{Performance of \pdtebin}
In this section, we report on our experiment with $\pdtebin$ on complete trees. Recall that for FHE supporting SIMD, we can use attribute values packing that allows to evaluate many attribute vectors together. We, therefore, focus on attribute packing to show the advantage of SIMD. Figure \ref{fig:classify_one_input} illustrates the amortized runtime of $\pdtebin$ with HElib. That is, the time of one PDTE evaluation divided by the number of slots provided by the used homomorphic context. As one can expect, the runtime clearly depends on the bitlength of the attribute values and the depth of the tree. The results show a clear  advantage of HElib when classifying large data sets.
For paths aggregation, we proposed $\evalpathe$ (Algorithm \ref{fig:eval_path_log_mult_depth}) and $\evalpathp$ (Algorithm \ref{alg:evalpathmultDAG}). Figure \ref{fig:dag_vs_separate} illustrates $\pdtebin$ runtime using these algorithms in a multi-threaded environment and shows a clear advantage of $\evalpathp$ which will be used in the remaining experiments with $\pdtebin$.
Figure \ref{fig:relation_16bit} illustrates the runtime of $\pdtebin$ with $\helibmed$ showing that the computation cost is dominated by the computation of decision bits which involves homomorphic evaluation of comparison circuits. In Figure \ref{fig:pdte_tfhe}, we report the evaluation of $\pdtebin$ using TFHE, which shows a clear advantage compare to HElib. For the same experiment with 72 threads, TFHE evaluates a complete tree of depth 10 and 64-bit input in less than 80 seconds, while HElib takes about 400 seconds for 16-bit input. Recall that, a CUDA implementation \cite{dai2015cuhe, cuFHE} of TFHE can further improve the time of $\pdtebin$ using TFHE.
\begin{figure}[tbp]
 	\begin{center} 
 		\includegraphics[width=.4\textwidth]{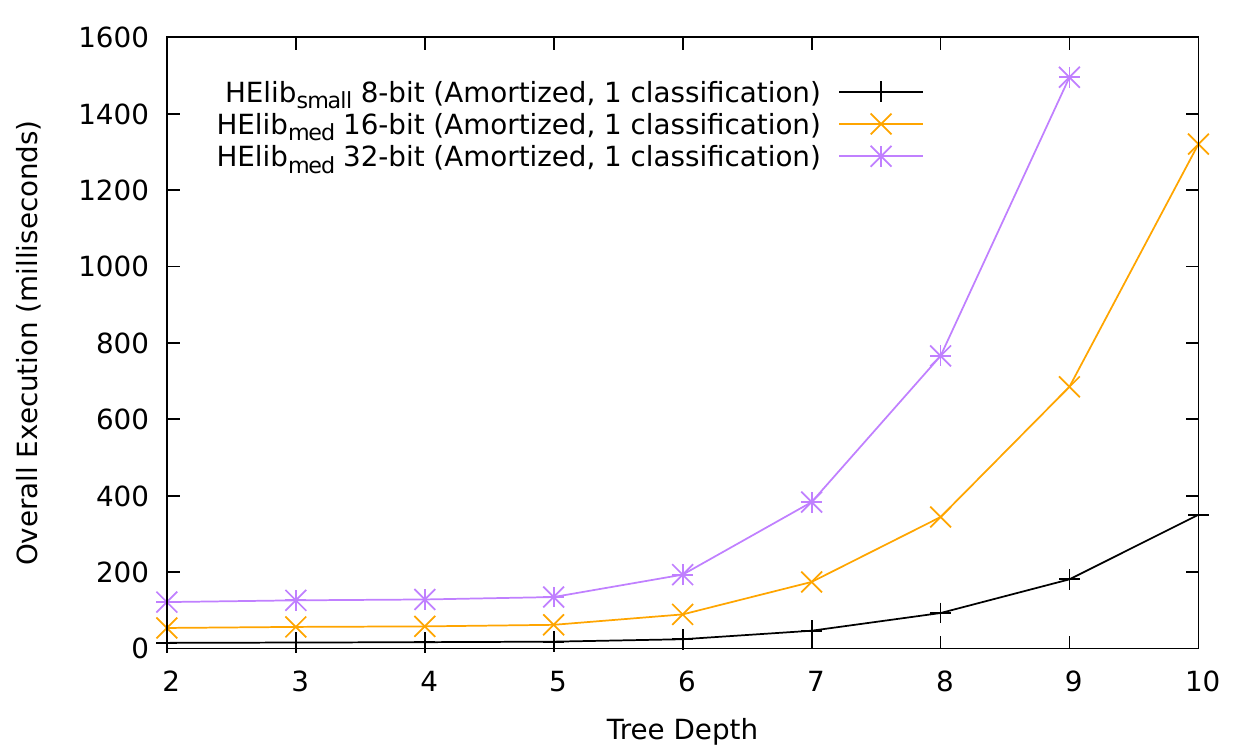}
 		\caption{Amortized $\pdtebin$ Runtime with HElib} 
 		\label{fig:classify_one_input}
 	\end{center}
\end{figure}

 \begin{figure}[tbp]
	\begin{center} 
		\includegraphics[width=.4\textwidth]{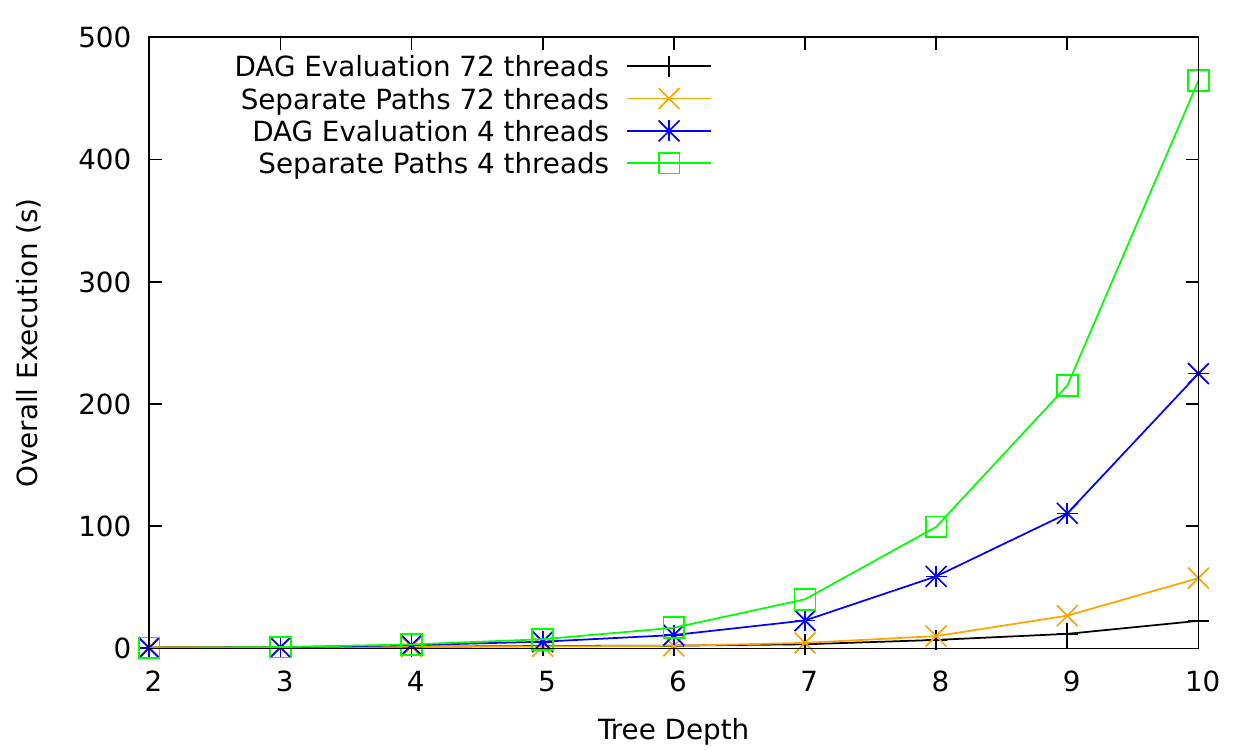}
		\caption{$\pdtebin$ Runtime with $\helibsmall$ Comparing $\evalpathp$ (DAG) vs. $\evalpathe$ (Separate Paths)}
		\label{fig:dag_vs_separate}
	\end{center}
\end{figure}

\begin{figure}[tbp]
	\begin{center} 
		\includegraphics[width=.4\textwidth]{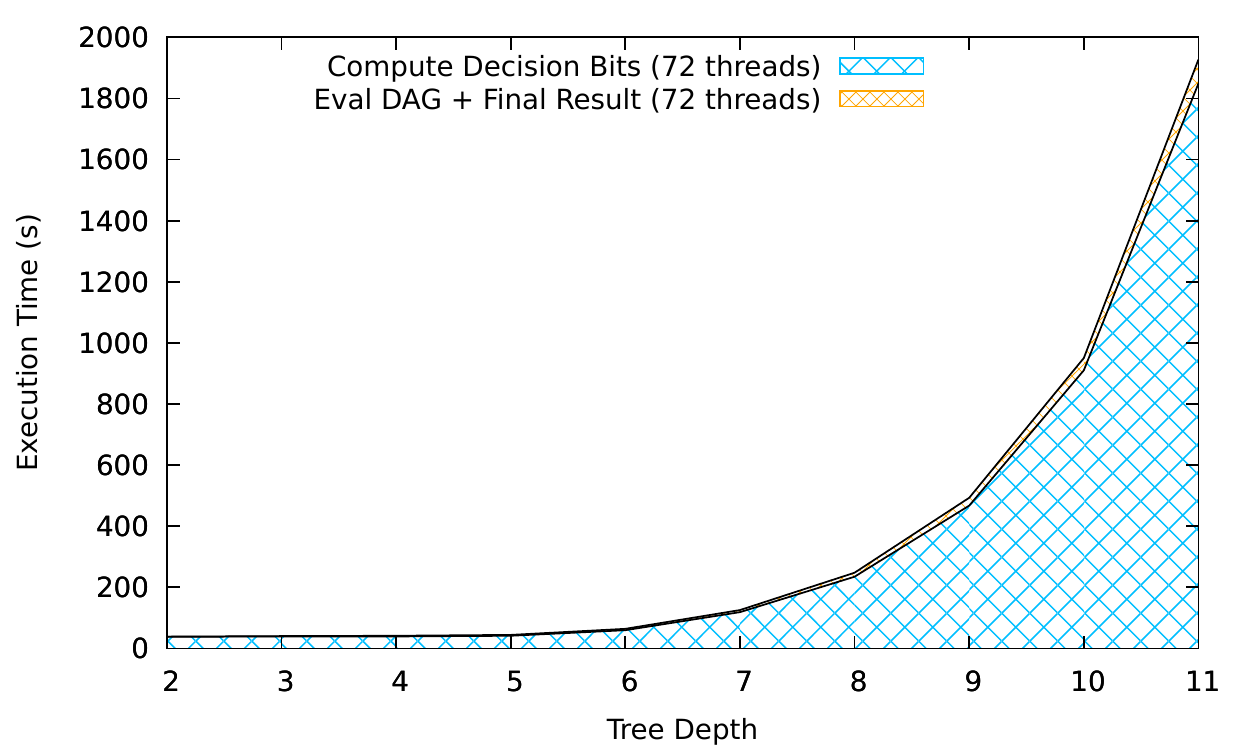}
		\caption{$\pdtebin$ Runtime with $\helibmed$  for 16-bit inputs} 
		\label{fig:relation_16bit}
	\end{center}
\end{figure}

\begin{figure}[tbp]
	\begin{center}
		\includegraphics[width=.4\textwidth]{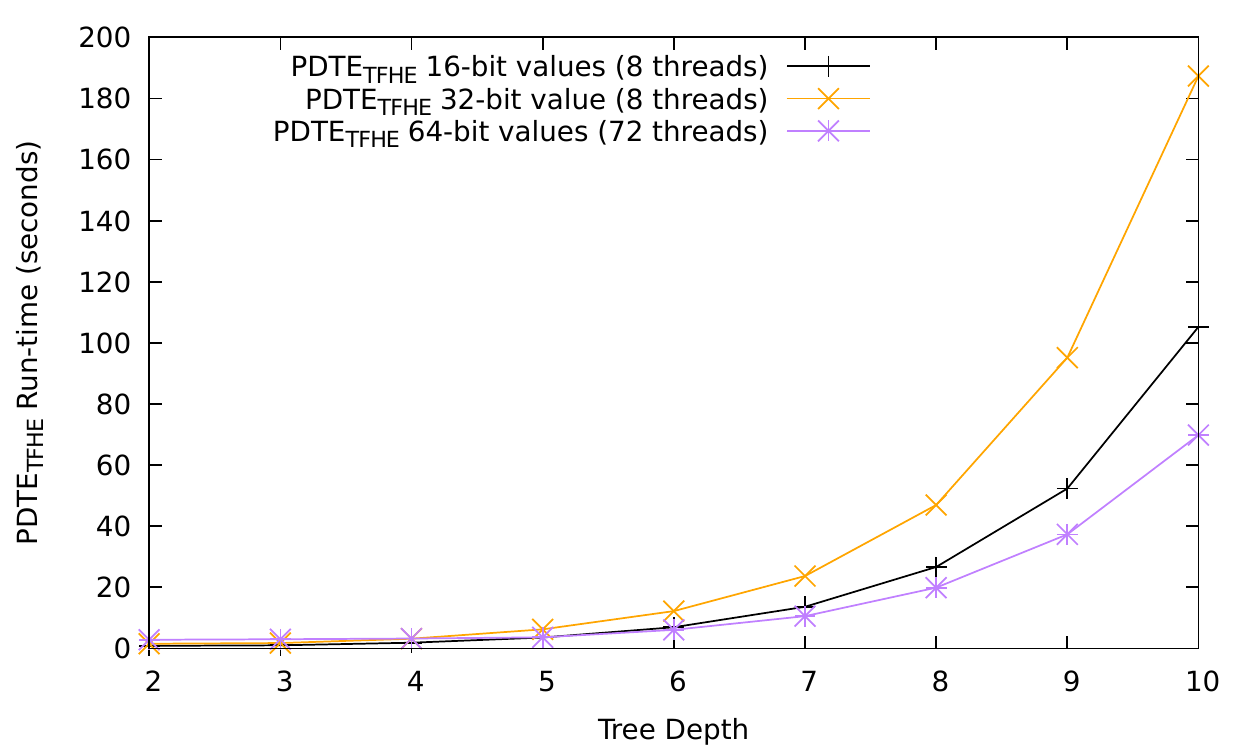}
		\caption{$\pdtebin$ Runtime with TFHE}
		\label{fig:pdte_tfhe}
	\end{center}
\end{figure}

\subsection{Performance of our Schemes on Real Datasets}
We also performed experiments on real datasets from the UCI repository \cite{UciRepository}. We performed experiments for both $\pdtebin$ and $\pdteint$ for the datasets illustrated in Table \ref{tab:our_vs_real} (parameters $n, d, m$ are defined in Table \ref{Notation_Table}). For $\pdtebin$, we reported the costs for HElib (single and amortized) and the costs for TFHE. Since TFHE evaluates only boolean circuits, we only have implementation and evaluation of $\pdteint$ with HElib. We also illustrate in Table \ref{tab:our_vs_real} the costs of two best previous works that rely only on homomorphic encryption, whereby the figures are taken from the respective papers \cite{LuZS2018, TaiMZC.2017}.
For one protocol run, $\pdtebin$ with TFHE is much more faster than $\pdtebin$ with HElib which is also faster than $\pdteint$ with HElib. However, because of the large number of slots, the amortized cost of $\pdtebin$ with HElib is better. For 16-bit inputs, our amortized time with HElib and our time with TFHE outperform XCMP \cite{LuZS2018} which used 12-bit inputs. For the same input bitlength, XCMP is still much more better than our one run using HElib, since the multiplicative depth is just 3. However, our schemes still have a better communication and $\pdtebin$ has no leakage. While the scheme of Tai et al.~\cite{TaiMZC.2017} in the semi-honest model has a better time for 64-bit inputs than our schemes for 16-bit inputs, it requires a fast network communication and at least double cost in the malicious model. The efficiency of Tai et al. is in part due to their ECC implementation of the lifted ElGamal \cite{ElGamal1985}, which allows a fast runtime and smaller ciphertexts, but is not secure against a quantum attacker, unlike lattice-based FHE as used in our schemes.


\begin{table}[tbp]
	\begin{center}
		\begin{tabular}{ c | c c c c }
			&Heart-disease & Housing  & Spambase & Artificial \\ 
			& (\heartdisease) & (\housing)  & (\spambase)& (\artificial) \\
			\hline 
			$n$ & 13 & 13 & 57 &  16 \\
			$d$ &  3 & 13 & 17 &  10\\
			$m$ &  5 & 92 & 58 & 500\\
		\end{tabular}
		\caption[Real Datasets and Model Parameters]{Real Datasets and Model Parameters}
		\label{tab:our_vs_real}
	\end{center}
\end{table}

\begin{table}[tbp]
	\begin{center}
		\begin{tabular}{l| c | c | c | c | c | c | c c}
			\multicolumn{1}{ l|}{} & \pdtebin &  \multicolumn{2}{ c|}{\pdtebin} & \multicolumn{2}{ c|}{\pdteint} & \cite{LuZS2018}& \cite{TaiMZC.2017}  \\ 
			\multicolumn{1}{ l|}{} & (TFHE) & \multicolumn{2}{ c|}{(HElib)}& \multicolumn{2}{ c|}{(HElib)} & (HElib)& (mcl) \\ 
			\multicolumn{1}{ l|}{$\secprm$}  & 128 & \multicolumn{2}{ c|}{150}  & \multicolumn{2}{ c|}{135} & 128 & 128\\
			\multicolumn{1}{ l|}{$\inputlen$}  & 16 & \multicolumn{2}{ c|}{16}  & \multicolumn{2}{ c|}{16} & 12 & 64\\  
			\multicolumn{1}{ l|}{\#thd}  & 16 & \multicolumn{2}{ c|}{16}  & \multicolumn{2}{ c|}{16} & 16 & -\\ \cline{2-8}
			\multicolumn{1}{ l|}{} & one & am. & one& am. & one & one & one \\ 
			\hline 
			\multicolumn{1}{ l|}{\heartdisease}  & 0.94 & 0.05 & 40.61 & 0.0073 & 45.59 & 0.59 & 0.25\\
			\multicolumn{1}{ l|}{\housing}    & 6.30 & 0.35 & 252.38 & 0.90 & 428.23 & 10.27 & 1.98 \\
			\multicolumn{1}{ l|}{\spambase}   & 3.66 & 0.24 & 174.46 & 0.72 & 339.60 & 6.88 & 1.80\\
			\multicolumn{1}{ l|}{\artificial}  & 22.39 & 1.81 & 1303.55 & 0.75 & 2207.13 & 56.37 & 10.42\\
		\end{tabular}
		\caption[Runtime of PDTE on Real Datasets]{Runtime (in seconds) of PDTE on Real Datasets: \scriptsize{$\secprm$ is the security level. $\inputlen$ is the input bit length. \#thd is the number of threads. mcl\cite{mcl} is a pairing-based cryptography library. Column \enquote{one} reports the time for one protocol run while \enquote{am.} reports the amortized time (e.g., the time for one run divided by $\slotnb$).}}
		\label{tab:our_vs_real}
	\end{center}
\end{table}


\section{Conclusion}
\label{Conclusion}
While almost all existing PDTE protocols require many interaction between the client and the server, we designed and implemented novel client-server protocols that delegate the complete evaluation to the server while preserving privacy and keeping the overhead low. Our solutions rely on SHE/FHE and evaluate the tree on ciphertexts encrypted under the client's public key. Since current SHE/FHE schemes have
high overhead, we combine efficient data representations with different algorithmic optimizations to keep the computational overhead and the communication cost low. 

\bibliographystyle{IEEEtranS}
\bibliography{IEEEabrv, NonInteractiveDT}

\appendices

\section{Security Analysis}
\label{security_proofs}
\subsection{Proof of Lemma \ref{LinTzengLemma}}
\begin{proof}
If $V = V_{x_i}^{1} - V_{y_j}^{0}$ has a unique  0 at a position $l, (1\leq l \leq \inputlen)$ then $u_{il}$ and $v_{il}$ have bit representation $z_{\inputlen-l+1} \cdots z_{1}$, where for each $h, \inputlen-l+1 \geq h \geq 2$, $z_{h} = x_{ig} = x_{jg}$ with $g=l+h-1$, and $z_{1} = x_{il} = 1$ and $x_{jl} = 0$. It follows that $x_i > y_j$.

\noindent If $x_i > y_j$ then there exists a position $l$ such that for each  $h, \inputlen \geq h \geq l+1$, $x_{ih} = x_{jh}$ and $x_{il} = 1$ and $x_{jl} = 0$. This implies $u_{il} = v_{il}$. 

For $h, \inputlen \geq h \geq l+1$, either $u_{ih}$ bit string is a prefix of $x_{i}$ while $v_{jh}$ is random, or $u_{ih}$ is random while  $v_{jh}$ bit string is a prefix of $y_{j}$. From the choice of $r^{(0)}_{ih}$, $r^{(1)}_{ih}$, we have $u_{ih} \neq v_{ih}$. 

For $h, l-1 \geq h \geq 1$ there are three cases: $u_{ih}$ and $v_{ih}$ (as bit string) are both prefixes of $x_{i}$ and $y_{j}$, only one of them is prefix, both are random. For the first case the difference of the bits at position $l$ and for the other cases the choice of $r^{(0)}_{ih}$ imply that $u_{ih} \neq v_{ih}$.
\end{proof}

\subsection{correctness}
The correctness for the basic scheme follows directly from Lemma \ref{thr:class_function_corretness}. For the binary implementation, we proved with Lemmas \ref{thr:log_mult_depth}, \ref{thr:dag_max_def}, \ref{thr:dag_mult_dep_graph} that aggregating the paths using Algorithms \ref{fig:eval_path_log_mult_depth} and \ref{alg:evalpathmultDAG} is correct. For the integer implementation, Lemma \ref{LinTzengLemma} ensures the correctness of the comparison. The classification path is marked with 0 on all edges while the other paths are marked with at least one random number. As a result, summing up the marks along the paths returns 0 for the classification path and a random number for all other paths.

\subsection{security}
It is straightforward to see that our protocols are secure. There is no interaction with the client during the computation and a semi-honest server sees only IND-CPA ciphertexts. A semi-honest client only learns the encryption of the result (and additional encryptions of random elements for $\pdteint$). A malicious server can only return a false classification result. This is inherent to private function evaluation where the function (the decision tree in our case) is an input to the computation. A malicious client can send a too \enquote{noisy} ciphertext, such that after the computation at the server a correct decryption is not possible, leaking some information. This attack works only with level FHE and is easy to deal with, namely the computation of a ciphertext capacity is a public function which the server can use to check the ciphertexts before starting the computation. Therefore, we state the following:
\begin{theorem}
Our protocols correctly and securely implement the PDTE functionality $\func{F}{PDTE}$.
\end{theorem}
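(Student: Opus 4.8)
The plan is to prove the two assertions separately: first \emph{correctness} (Definition \ref{Correctness_Def}), then \emph{security} in the semi-honest model together with the stronger \emph{one-sided simulation} notion (Definition \ref{Semi-Honest_Security_Def}), carrying each argument through both instantiations $\pdtebin$ and $\pdteint$.

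For correctness I would simply chain the lemmas already established. The basic protocol (Protocol \ref{Basic_PDTE_Algo}) homomorphically computes $\classfun(x,\mathcal{M})$, and Lemma \ref{thr:class_function_corretness} shows that the marked-edge aggregation equals $\mathcal{T}(x)$; the HE correctness conditions then guarantee that the client's decryption of $\ctxtrep{\mathcal{T}(x)}$ returns $\mathcal{T}(x)$, provided the noise stays within the decryptable range. For $\pdtebin$ I would invoke Lemmas \ref{thr:log_mult_depth}, \ref{thr:dag_max_def}, and \ref{thr:dag_mult_dep_graph} to argue that the path-aggregation Algorithms \ref{fig:eval_path_log_mult_depth} and \ref{alg:evalpathmultDAG} compute the same product of decision bits as the naive aggregation but with logarithmic multiplicative depth, hence inside the level budget $\nblevel$. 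For $\pdteint$ I would invoke Lemma \ref{LinTzengLemma}: the classification path is marked with $0$ on every edge while every other path carries at least one random nonzero mark, so the path sums select exactly the correct label after the final randomized masking.

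For security I would construct the two simulators. The server receives only $(\pk,\ek)$ and the ciphertexts encoding $x$; since the entire computation is local and the inputs are encrypted under a semantically secure scheme, the server simulator encrypts dummy values (e.g. zeros) of the correct shape and runs the public evaluation algorithm, which is indistinguishable from $\view{S}{\prot{PDTE}}$ by a standard hybrid/IND-CPA argument. The same argument yields Equation \ref{eq:PDTE_Security_OSS_Eqn_Server} directly, i.e. indistinguishability of $x$ and $x'$ to the server. The client's view consists of $(\pk,\sk,\ek)$ together with the single returned ciphertext $\ctxtrep{\mathcal{T}(x)}$ (plus masked random ciphertexts for $\pdteint$), so the client simulator, given $d$, $x$, and $\mathcal{T}(x)$, generates keys honestly and produces a ciphertext of $\mathcal{T}(x)$.

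The hard part will be the client-side indistinguishability, because the client holds $\sk$ and the ciphertext it receives is an \emph{evaluated} ciphertext, not a fresh one. Its noise/capacity distribution may depend on the evaluated circuit, hence on the hidden model $\mathcal{M}$, and could in principle leak more than $d$ and $\mathcal{T}(x)$. To close this gap I would require the output to be \emph{sanitized} so that its distribution depends only on the plaintext and the public parameters: either a bootstrapping/refresh step giving a canonical fixed-noise ciphertext, or statistical noise-flooding and rerandomization applied to the final result before it is sent. Under such sanitization the simulator's fresh-then-sanitized encryption of $\mathcal{T}(x)$ is statistically (or computationally) close to the real output, completing the client simulation; for $\pdteint$ the additional returned ciphertexts are uniform masks $\ctxtrep{\cost_v \cdot r_v + v.\clabel}$ in random order, which the simulator reproduces as encryptions of independent uniform plaintexts. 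Finally I would record the two adversarial edge cases flagged in the text: a malicious server can only substitute a wrong $\mathcal{M}$, which is inherent to private function evaluation and outside the functionality, and a malicious client submitting an over-noisy ciphertext is defeated by the server's public capacity check performed before evaluation.
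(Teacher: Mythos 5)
Your proposal follows the same decomposition as the paper's own proof. For correctness you chain exactly the lemmas the paper chains: Lemma \ref{thr:class_function_corretness} for the basic protocol, Lemmas \ref{thr:log_mult_depth}, \ref{thr:dag_max_def} and \ref{thr:dag_mult_dep_graph} for the $\pdtebin$ path aggregation, and Lemma \ref{LinTzengLemma} for $\pdteint$, with the classification path summing to zero and every other path carrying a random mark. For security you argue, as the paper does, that the protocol is non-interactive, that a semi-honest server sees only IND-CPA ciphertexts (which also gives Equation \ref{eq:PDTE_Security_OSS_Eqn_Server}), and you record the same two adversarial edge cases: a malicious server can only substitute a wrong model, which is inherent to private function evaluation, and a malicious client's over-noisy ciphertext is defeated by the server's public capacity check. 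Where you genuinely depart from the paper is the client-side simulation. The paper's proof simply asserts that a semi-honest client \enquote{only learns the encryption of the result}, implicitly treating the returned ciphertext as if it were fresh. You correctly point out that it is an \emph{evaluated} ciphertext, that a client holding $\sk$ can recover its noise, and that this noise depends on the evaluated circuit and hence on the hidden model $\mathcal{M}$ beyond its depth $d$ --- so Equation \ref{eq:PDTE_Security_Eqn_Client} does not follow from semantic security alone. Your remedy (sanitizing the output via a refresh/bootstrapping step or noise flooding and rerandomization, so that the output distribution depends only on the plaintext and public parameters) is the standard circuit-privacy fix and is exactly what is needed to make the client simulator sound; for $\pdteint$ you also correctly simulate the extra returned ciphertexts as encryptions of independent uniform plaintexts. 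In short: same route for correctness and for the server side, but your treatment of the client side repairs a real gap that the paper's informal proof glosses over, at the (necessary) cost of assuming a sanitization step the paper never makes explicit.
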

As $\pdtebin$ returns the bit representation of the resulted classification label whose bitlength is public (i.e., the set of possible classification labels is known to the client), there is no leakage beyond the final output. $\pdteint$ returns as many ciphertexts as there are leaves and, therefore, leaks the number of decision nodes.

\section{Complexity analysis}
\label{complexity_analysis}
We now analyse the complexity of our scheme, distinguishing between the binary and the integer implementations. In the following, we assume that the decision tree is a complete tree with depth $d$.
\subsection{Complexity of the Binary Implementation}

The SHE comparison circuit has multiplicative depth $|\inputlen-1|+1$ and requires $\bigO{\inputlen \cdot |\inputlen|}$ multiplications \cite{CheonKK15, CheonKK16,CheonKL15}. That is, the evaluation of all decision nodes requires $\bigO{2^d \inputlen \cdot |\inputlen|}$ multiplications. The path evaluation has a multiplicative depth of $|d-1|$ and requires for all $2^d$ paths $\bigO{d2^d}$ multiplications. The evaluation of the leaves has a multiplicative depth of 1 and requires in total $2^d$ multiplications. The total multiplicative depth for $\pdtebin$ is, therefore, $|\inputlen-1|+|d-1| + 2 \approx |\inputlen|+|d| + 2$ while the total number of multiplications is $\bigO{2^d \inputlen \cdot |\inputlen| + d2^d + 2^d} \approx \bigO{d2^d}$.

For the label packing, the bit representation of each classification label is packed in one ciphertext. This hold for the final result as well. As a result, if the tree is complete and all classification labels are distinct, then the server sends $\ceil{\frac{d}{\slotnb}}$ ciphertext(s) to client. In practice, however, $\ceil{\frac{d}{\slotnb}} = 1$ holds as $d$ is smaller that the number $\slotnb$ of slots. 

For threshold packing, the decision bit at node $v$ will be encrypted as $ \ctxtrep{b_v | 0 | ... | 0}$. Then if we encrypt the classification label $c_i = c_{i|k|} ... c_{i1}$ as $\ctxtrep{c_{i|k|} | 0 | ... | 0}, ..., \ctxtrep{c_{i1} | 0 | ... | 0}$, the final result $c_l$ will be encrypted similarly such that with extra shifts, we can build the ciphertext $ \ctxtrep{c_{l|k|} | ... | c_{l1} | 0 | ... | 0}$. As a result, the server sends only 1 ciphertext back to the client.

For other cases (e.g., attribute packing, or no packing at all as in the current implementation of TFHE), the bits of a classification label are encrypted separately which holds for the final result as well. As a result the server sends back $d$ ciphertexts to the client.

\subsection{Complexity of the Integer Implementation}

The modified LinTzeng comparison circuit has multiplicative $|\inputlen - 1|$ and requires $\bigO{\inputlen - 1}$ multiplications. As a result, the evaluation of all decision node requires $\bigO{(\inputlen - 1)2^d}$ multiplications. In $\pdteint$, the path evaluation does not requires any multiplication. However, the leave evaluation has a multiplicative depth of 1 and requires in total $2^d$ multiplications. The total multiplicative depth for $\pdteint$ is therefore $|\inputlen-1| + 1 \approx |\inputlen| + 1$ while the total number of multiplications is $\bigO{(\inputlen - 1)2^d + 2^d} \approx \bigO{2^d}$.

For $\pdteint$, it is not possible to aggregate the leaves as in $\pdtebin$. If the client is classifying many inputs, the server must send $2^d$ ciphertexts back. If the client is classifying only one input, then the server can use shifts to pack the result in $\ceil{\frac{2^d}{s}}$ ciphertext(s).

\section{Homomorphic Operations in TFHE}
As already mentioned earlier, the current version of TFHE only supports binary gates. According to Chillotti et al.~\cite{ChillottiGGI17, ChillottiGGI18}, gate bootstrapping and gate evaluation cost about 13 ms for all binary gates except for the MUX gate, which costs 26 ms on a modern processor. For a full list of available gates, we refer to Chillotti et al.~\cite{TFHE}. In Table \ref{tab:tfhe_gates}, we illustrate the runtime of TFHE's gate evaluation with our testbed. The figures are given as average over 1000 runs.

\begin{table}[tbp]
	\begin{center}
		\begin{tabular}{l l c}
			Gate Name & Gate Functionality & Run-time (ms) \\ 
			& & 128-bit security\\
			\hline
			CONSTANT & \textit{result = encode(int)} & 0.00052 \\
			NOT & \textit{result} = $\neg a$ & 0.00051 \\
			COPY & \textit{result} = $a$  & 0.00035 \\
			NAND  & \textit{result} = $\neg(a \wedge b)$ & 11.32751\\
			OR  & \textit{result =} $ a \vee b$ & 11.40669 \\
			AND &  \textit{result =} $a \wedge b$& 11.38739  \\
			XOR &  \textit{result} = $a + b \bmod 2$  & 11.39326  \\
			XNOR &  \textit{result} = $(a = b)$ & 11.39418  \\						
			NOR & \textit{ result} = $\neg(a \vee b)$ & 11.39813  \\
			ANDNY & \textit{ result} = $\neg a \wedge b$ & 11.39255  \\
			ANDYN & \textit{ result} = $a \wedge \neg b$ & 11.39737  \\
			ORNY & \textit{ result} = $\neg a \vee b $ & 11.40777  \\
			ORYN & \textit{ result} = $ a \vee \neg b$ & 11.39940  \\
			MUX & \textit{ result} = $ a ? b : c$ & 21.29517  \\
		\end{tabular}
		\caption{TFHE Binary Bootstrapping Gates } 
		\label{tab:tfhe_gates}
	\end{center}
\end{table}

\section{Extension To Random Forest}
\label{Extension_Random_Forest}
In this section, we briefly describe how the binary implementation $\pdtebin$ can be extended to evaluate a random forest non-interactively.  A random forest is a generalization of decision tree  which consists of many trees. A classification with a random forest then evaluates each tree in the forest and outputs the classification label which occurs most often. Hence, the classification labels are ranked by their number of occurrences and the final result is the best ranked one.

Let the random forest consists of trees $\mathcal{T}_1, \ldots, \mathcal{T}_N$ and let $\pdtebin_S(\mathcal{T}_j, x)$ denote the evaluation of the decision tree $\mathcal{T}_j$ on input vector $x$ resulting in $\mathcal{T}_j(x) = R_j$, which is encrypted as  $\ctxtrep{\bitrep{R_j}} = (\ctxtrep{R_{j|k|}}, \ldots, \ctxtrep{R_{j1}})$, where $\bitrep{R_j} = R_{j|k|} \ldots R_{j1}$. Let's assume, there are $k$ classification labels $c_1, \ldots, c_k$ with $\bitrep{c_i} = c_{i|k|} \ldots c_{i1}$ and each $c_i$ has encryptions $\ctxtrep{\bitrep{c_i}} = (\ctxtrep{c_{i|k|}}, \ldots, \ctxtrep{c_{i1}})$ and  $\ctxtrep{\vec{c}_i} = \ctxtrep{c_{i|k|} | \ldots| c_{i1}}$. Let $f_i$ denote the number of occurrences of $c_i$ after evaluating the $N$ trees, with encryption $\ctxtrep{\bitrep{f_i}} = (\ctxtrep{f_{i|N|}}, \ldots, \ctxtrep{f_{i1}})$, where $\bitrep{f_i} = f_{i|N|} \ldots f_{i1}$. 

The computation requires the routine $\shecmp$ and two new ones: $\shefadder$ and $\sheequ$.

Full adder: Let $b_{i1}, \ldots, b_{in}$ be $n$ bits such that $r_i = \sum_{j=1}^{n}b_{ij}$ and let $r_i^b=r_{i\log n}, \ldots, r_{i1}$ be the bit representation of $r_i$. The routine $\shefadder$ implements a full adder on $\lsem b_{i1} \rsem, \ldots, \lsem b_{in} \rsem$ and returns $\lsem r_i^b \rsem = (\lsem r_{i\log n} \rsem, \ldots, \lsem r_{i1} \rsem)$. 

Equality testing: There is no built-in routine for equality check in HElib. We implemented it using \textsc{SheCmp} and  \textsc{SheAdd}. Let $x_i$ and $x_j$ be two $|k|$-bit integers.
We use \textsc{SheEqual} to denote the equality check routine and implement $\textsc{SheEqual}(\ctxtrep{\bitrep{x_i}},  \ctxtrep{\bitrep{x_j}})$ by computing:
\begin{itemize}
	\item $(\lsem b'_{i} \rsem$, $\lsem b''_{i} \rsem) = \textsc{SheCmp}(\ctxtrep{\bitrep{x_i}}, \ctxtrep{\bitrep{x_j}})$ and
	\item  $\lsem \beta_i\rsem = (\lsem b'_{i} \rsem \oplus \lsem b''_{i} \rsem \oplus \lsem 1 \rsem)$, which results in $\beta_i = 1$ if $x_i = x_j$ and $\beta_i = 0$ otherwise.
\end{itemize}

To select the best label, the random forest algorithm either uses majority voting or $\mathsf{argmax}$.
For majority voting, $c_i$ is the final result if an only if $f_i \geq t$, where $t = \ceil{\frac{N}{2}}$ with bit representation $\bitrep{t} = t_{|N|} \ldots t_{1}$ and encryption $\ctxtrep{\bitrep{t}} = (\ctxtrep{t_{|N|}}, \ldots, \ctxtrep{t_{1}})$. The computation is described in Algorithm \ref{fig:Random_Forest_PDTE}.
For $\mathsf{argmax}$, $c_i$ is the final result if an only if $f_i$ is larger than all other $f_j, j \neq i$. The computation is described in Algorithm \ref{fig:Random_Forest_PDTE_Max}.

\begin{figure}[tbp]
	\renewcommand{\figurename}{Algorithm}
	\centering
	\begin{mdframed}
	\begin{algorithmic}[1]
		
		\For {$j = 1$ \textbf{to} $N$} \label{start}
			\State $\ctxtrep{\bitrep{R_j}} \gets \pdtebin_S(\mathcal{T}_j, x)$ 
			\For {$i = 1$ \textbf{to} $k$}
				\State $\ctxtrep{b_{ij}} \gets \sheequ(\ctxtrep{\bitrep{R_j}}, \ctxtrep{\bitrep{c_i}})$
			\EndFor
		\EndFor
		\State $\ctxtrep{\result} \gets \ctxtrep{0}$
		\For {$i = 1$ \textbf{to} $k$}
				\State $\ctxtrep{\bitrep{f_{i}}} \gets \shefadder(\ctxtrep{b_{i1}}, \ldots, \ctxtrep{b_{iN}})$ \label{compute_fi}
				\State $\ctxtrep{e_{i}} \gets \shecmp(\ctxtrep{\bitrep{f_{i}}}, \ctxtrep{\bitrep{t}})$
				\State $\ctxtrep{\result} \gets \ctxtrep{\result} \oplus (\ctxtrep{e_{i}} \odot \ctxtrep{\vec{c}_i})$ 
		\EndFor
		\State \Return $\ctxtrep{\result}$
		
	\end{algorithmic}
	\end{mdframed}
	\caption[Private Random Forest Majority Voting]{Private Random Forest With Majority Voting}
	\label{fig:Random_Forest_PDTE}
\end{figure}

\begin{figure}[tbp]
	\renewcommand{\figurename}{Algorithm}
	\centering
	\begin{mdframed}
	\begin{algorithmic}[1]
		\State Compute $\ctxtrep{\bitrep{f_{i}}}$ as in Algorithm \ref{fig:Random_Forest_PDTE} Lines \ref{start} to \ref{compute_fi}
		\For {$i := 1$ \textbf{to}  $k$}
			\State $\lsem \beta_{ii} \rsem \leftarrow \lsem 1 \rsem$
			\For {$j := i+1$ \textbf{to}  $k$} 
				\State $(\lsem \beta_{ij} \rsem, \lsem \beta_{ji} \rsem) \leftarrow \textsc{SheCmp}(\ctxtrep{\bitrep{f_{i}}}, \ctxtrep{\bitrep{f_{j}}})$
			\EndFor
		\EndFor
		\For {$i := 1$ \textbf{to}  $k$} 
				\State $\lsem \bitrep{r_{i}} \rsem \leftarrow \textsc{SheFadder}(\lsem \beta_{i1} \rsem, \ldots, \lsem \beta_{ik} \rsem)$
		\EndFor
		\For {$i := 1$ \textbf{to}  $k$}
			\State $\lsem e_{i} \rsem \leftarrow \textsc{SheEqual}(\lsem \bitrep{r_{i}} \rsem, \lsem \bitrep{k} \rsem)$ \label{compute_f_r}
		\EndFor
		\For {$i := 1$ \textbf{to}  $k$}
			\State $\ctxtrep{\result} \gets \ctxtrep{\result} \oplus (\ctxtrep{e_{i}} \odot \ctxtrep{\vec{c}_i})$
		\EndFor
		\State \Return $\ctxtrep{\result}$
	\end{algorithmic}
	\end{mdframed}
	\caption[Private Random Forest Maximum Voting]{Private Random Forest with Maximum Voting}
	\label{fig:Random_Forest_PDTE_Max}
\end{figure}

\end{document}